\providecommand{\keywords}[1]
{
  \small	
  \textbf{\textit{Keywords:}} #1
}
\newtheorem{model}[theorem]{Model}
\newcommand{\bx}{\mbox{\boldmath{$x$}}}
\newcommand{\bff}{\mbox{\boldmath{$f$}}}
\newcommand{\bbF}{\mbox{\boldmath{$F$}}}
\begin{document}

\title{A mathematical model of the  COVID-19   pandemic dynamics
with dependent variable  infection rate
}
\subtitle{Application to the Republic of  Korea}

\titlerunning{Mathematical model of  COVID-19}        

\author{Aycil Cesmelioglu \and Kenneth L. Kuttler \and Meir Shillor \and Anna M. Spagnuolo
}

\authorrunning{A. Cesmelioglu \and K. L. Kuttler \and M. Shillor \and A. M. Spagnuolo} 

\institute{Aycil Cesmelioglu \at
              Department of Mathematics and Statistics, Oakland University, Rochester, MI, USA
              \email{cesmelio@oakland.edu}           
              \emph{ORCID ID: 0000-0001-8057-6349} 
           \and
           Kenneth L. Kuttler \at
              retired, USA
              \email{klkuttler@gmail.com}   
           \and 
           Meir Shillor \at
              Department of Mathematics and Statistics, Oakland University, Rochester, MI, USA
              \email{shillor@oakland.edu}   
              \emph{ORCID ID: 0000-0001-6811-9524}
           \and 
           Anna M. Spagnuolo \at
              Department of Mathematics and Statistics, Oakland University, Rochester, MI, USA
              \email{spagnuolo@oakland.edu}   
              \emph{ORCID ID: 0000-0003-3039-0970}
}

\date{Received: date / Accepted: date}

\maketitle

\begin{abstract}
This work constructs, analyzes, and simulates a new compartmental  SEIR-type model for the dynamics 
and potential control of the current COVID-19 pandemic. The novelty in this work is two-fold.
First, the population is divided according to  its compliance with disease control directives 
(lockdown, shelter-in-place, masks/face coverings, physical distancing, etc.) into those 
who fully comply and those who follow  the directives partially, or are 
necessarily mobile (such as medical staff). This split,  indirectly, reflects on the
quality and consistency of these measures. This allows the assessment of the overall effectiveness 
of the control measures and the impact of their relaxing or tightening on the disease spread. Second,
the adequate contact rate, which directly affects the infection rate,  is one of the model 
unknowns, as it keeps track of the changes in the population behavior and the effectiveness of 
various disease treatment modalities  via a differential inclusion. 
Existence, uniqueness and positivity results are proved using a nonstandard convex analysis based approach.
   As a case study, the pandemic outbreak in the Republic of Korea (South Korea) is 
simulated. The model parameters were found by minimizing the deviation of the model 
prediction from the reported data over the first  $100$ days of the pandemic in South Korea.
The simulations  show that  the model captures accurately the  pandemic dynamics in the 
subsequent $75$ days,  which provides confidence in the model predictions and its future use.
In particular, the model predicts that about $40\%$ of the 
 infections were not documented, which implies that asymptomatic infections contribute silently 
 but substantially to the spread of the disease indicating that more widespread asymptomatic 
 testing is necessary.

\keywords{COVID-19 \and SARS-CoV-2 \and compartmental continuous model \and  nonlinear fitting \and simulations}
\subclass{MSC 92D30  \and 92B05 \and 92B99 \and 34A60 \and 34F05 }
\end{abstract}

\section{Introduction}
\label{intro}


The novel coronavirus SARS-CoV-2 emerged in Wuhan, China, in December 2019 as a 
mutation of the Severe Acute Respiratory Syndrome Coronavirus, SARS-CoV.  The first group 
of COVID-19 patients, reported in Wuhan 
exhibited flu-like symptoms resulting in the serious cases in pneumonia and death.  There is mounting 
evidence \citep{WHO,Marshall2020, Huang20} and the references therein,
that the virus affects blood vessels, and therefore, the body organs, which may cause 
long term health complications.  Currently, the pandemic affects all parts of the world, and 
can be found in over 200 countries and territories
\citep{WHO, Worldometer, JHU}. The World Health Organization (WHO) declared COVID-19 a pandemic on 
11 March 2020, which caused the introduction of emergency measures: large scale closing of borders, 
and lock-down of countries, states, regions, cities, and communities, as well as closing of schools 
and universities. As of 8 Aug.  2020, globally, there are over 14 million 
confirmed cases, over 600,000 deaths and over eight million  recovered   \citep{Worldometer, JHU}.
Thus, the number of COVID-19  infections 
by far exceed  the number of SARS or the Middle East Respiratory Syndrome (MERS) infections. 
The SARS-CoV-2 human-to-human virus transmission is mainly via airborne fluid droplets, 
especially among those who are in close proximity.  Transmission from infected patients to 
healthcare personnel has been often observed. 

Because of the global impact of the pandemic,  many researchers  
have been engaged in mathematical and statistical modeling of various aspects 
of the disease dynamics, aiming at predicting global and national trends, as 
well as specific regional behaviors of the virus spread. These are meant to help 
policymakers with hospital and emergency services preparedness and utilization; 
national and regional policy response decisions; business plans; implementation, relaxation 
and evaluation of various control measures; organization of large scale vaccination;
and scientific understanding of the many facets of the
disease dynamics. These efforts are reflected in the  large number of 
relevant publications listed in, see, e.g., 
\citep{CDCresearch,WHO,Wikipedia}, and many recent publications such as 
\citep{Hou20, Roumen, Eikenberry2020, FP20, Garba2020, GBB20}. An interesting mathematical model 
dealing with some of the social implications of the pandemic is  \cite{JP20}.
 Since this work is concerned with simulations until 8 Aug.  2020, we do not consider 
new virus strains, or the global vaccination efforts that are underway.

We construct a new model for the COVID-19 pandemic to help researchers and policymakers 
evaluate the effectiveness and effects of various intervention and mitigation strategies, 
in real-time. It is based on the ideas underlying the MERS model in \citep{ALSS16, AS18, A17}.  
However, our preliminary simulations of the COVID-19 pandemic using this MERS model produced 
unsatisfactory results. One of the main reasons is related to the  
government-mandated measures (how strict they were, how well they were communicated and enforced) and 
how well the population complied with them. Therefore, we distinguish
those who follow properly the directives (wear masks in public, keep 
an appropriate distance from other people and frequently wash hands and/or use disinfectants) and those who do not 
or only partially do so. The latter group includes also health care personnel who have to be in contact with 
sick people, retirement homes' personnel, and other essential workers. 
To account for this, we introduce a new time-dependent parameter $\theta$ that represents the division 
of the exposed and infected subpopulations into those who fully comply  and those who comply 
only partially. This parameter measures both the effectiveness of the directives and the population's compliance 
and allows to study the effects of these control measures and the possible impact 
of relaxing or tightening them on the disease spread. A related parameter in a simpler model was 
introduced in \citep{Roumen, Garba2020}, where the focus was on estimating the size of the 
exposed but asymptomatic population. Our model provides this information, too.
 
The second novelty  is in considering the `adequate contact parameter' $\beta$, \citep{hethcote00}, 
as one of the model dependent variables, instead of the usual assumption that it is a fixed constant, 
or a function of time, see e.g.,  (\cite{Greenhalgh95, Greenhalgh17, Thieme02, ONeill97}) and the references therein.  
The case when $\beta$ depends on the number of recovered $R$ 
can be found in \cite{Bobko2020}, where it is shown how the choice of $\beta(R)$ affects the stability of the 
equilibrium points in a `simple' SIR model. 
Here, instead of having to modify and fit $\beta$ as the pandemic progressed, 
we consider it as a dependent variable, since it is known that  $\beta$ changes with changing 
population behavior, (\cite{thieme03, Wiki-contactrate}),  with tightened/relaxed control measures, 
seasonal changes, and availability of more information on the spread of COVID-19 and its effects on 
human health. The change in $\beta$ can also be caused by the introduction of more 
effective face masks, and  possible mutations. The latter will likely involve randomness that  
we would like to study in the future.
Furthermore, in simple SEIR models, there is a saturation phenomenon, that is, a decline in infectivity unrelated 
to the `herd immunity,' see e.g., \citep{tkolokol20, ZZ03, HM93} and references therein. By choosing $\beta$ to be 
one of the model unknowns, we keep track of the intrinsic changes of the contact parameter, which 
includes the virus `infectiveness,' and also the changes in human behavior. Since $1/\beta$ is the average 
number of contacts needed to infect a healthy person, it has a restricted range, $0< \beta\leq 1$. This
restriction is modeled by a differential inclusion. For the sake of simplicity, we assume
that $\beta$ grows (linearly) with the infected and decays (linearly) with the recovered. This linearity 
assumption is \underline{{\it ad-hoc}} and is not based on any deeper epidemiological considerations.
Therefore, it is of considerable interest to explore more appropriate forms of the inclusion in the future.
Choosing $\beta$ as an unknown that is described by a differential inclusion captures the evolution of 
the pandemic more naturally and is expected to make the model more accurate.  However, our simulations for South Korea shows that  $\beta$ varies very little over the 175 days of simulations,
so taking it as a fixed parameter, in this case, leads to very similar results. This is likely due to 
their successful COVID-19 containment and mitigation strategies. Nevertheless, long-time simulations 
(1,000 days) show that $\beta$ changes to a lower value, which indicates that it is affected by the process.
A third novelty is the theoretical introduction of randomness into the model parameters. 
This wil be used in future works to study the sensitivity of the model to certain parameters and
 thus allowing for better predictions, as well as the introduction of virus mutations.

Since the model includes a differential inclusion, we establish the existence of 
the unique solution for the initial value problem,  based on a theorem 
in abstract Hilbert spaces, \citep{brezis}.   Moreover, the introduction of randomness
into the model parameters raises the question of the measurability of the solutions
with respect to the random variables. These mathematical issues are discussed in some detail
in Sections \ref{sec-exist} and \ref{sec-random}.

An algorithm for the model simulations was constructed and implemented in MATLAB. 
Several simulations that show the predictive power of the model are presented  
in the context of the pandemic dynamics in the Republic of  Korea. South Korea is chosen for the case study because it is one of the first countries to go through the disease cycle and the information provided by the government is very reliable.  In the simulations, we use some of the published data to `train' the model by using an optimization routine in MATLAB which finds the model parameters that provide a `best $\ell^1$ fit.'  We present our simulated 
model predictions together with the data, one part of which (the first 100 days) was 
used in the optimization while the remaining 75  days depict how close the model predictions 
and the data are. The simulations of 175 days indicate that the model captures well 
the disease dynamics in South Korea.

Following this introduction, the mathematical model for the pandemic is constructed
in Section \ref{sec-model}. It consists of a coupled  system of seven nonlinear 
ordinary differential equations and a differential inclusion for the contact variable $\beta$. The existence of the
unique solution to the model is  established in  Section \ref{sec-exist}.
Section \ref{sec-random}  describes very briefly the addition of randomness
and the measurability of the solutions with random parameters.
The stability of the two equilibrium states, the disease-free equilibrium
(DFE) and the endemic equilibrium (EE), is presented in Section \ref{sec-stability}, 
based on an expression for the system's Jacobian derived in the Appendix. 
Section \ref{sec-sims} outlines the algorithm used 
in the numerical simulations. Section \ref{sim-SK} reports the results of the 
simulations of the COVID-19 dynamics in South Korea. The baseline simulations are in
Subsection \ref{SKbaseline}, where the model predictions are compared to the data. 
This section also provides additional information about the various subpopulations,
which is difficult to obtain in the field, such as those who carry the virus,
can infect others, but are undocumented and asymptomatic,  
and the variation in $\beta$ as well as the stability of the DFE and EE.
The effectiveness of the control measures and their connection to $\theta$ is studied in
Subsection \ref{SK-thetaImp}, while Subsection \ref{SK-DR} presents the 
resulting case fatality and the infection fatality rates.
 Conclusions,  unresolved issues and future work can be found in Section \ref{sec-con}.

\section{The Model}
\label{sec-model}
This section  presents  a mathematical model for the dynamics of the COVID-19 pandemic.  
It is based on the ideas that led to the MERS model in \citep{ALSS16, AS18, A17}.  However, 
there are significant differences between the compartmental structures and the resulting set 
of equations of the two models. In particular, this model separates 
the subpopulations of those who comply with the disease control directives  and those 
who choose not to or partially comply. 

The model describes whole populations and assumes that they are large enough to 
justify continuous dependent variables, thus, we use ordinary differential equations 
(ODEs). When the geographical distribution  of the disease is important, or
when the density and culture of the population vary in  different parts of the 
country, it must be modified and partial differential equations (PDEs) need to be 
used instead. However, using PDEs considerably increases the mathematical 
complexity of the model and is beyond the scope of this work. 

The model assumes that there may be various disease control measures such as
voluntary or mandatory isolation, shelter-in-place directives, movement controls, closure of
various public spaces and places,
physical-distancing, and that a portion of the population practices these measures with varying degree. 
It describes the dynamics of seven subpopulations: 
{\it susceptibles} $S$; {\it  asymptomatic (exposed) fully compliant} $E_{fc}$;  
{\it  asymptomatic (exposed) partially compliant or  noncompliant} $E_{pc}$;   
{\it infected fully compliant}, $I_{fc}$; {\it partially compliant or  noncompliant infected} $I_{pc}$;
{\it hospitalized} $H$;  and {\it recovered} $R$. Unlike the usual SEIR models where those in 
the $E$ compartment(s) are latent and cannot infect others, here the term `exposed' also includes 
those who carry the virus and can infect others, but do not exhibit any symptoms. This compartment is of considerable 
interest in COVID-19 since a portion of the infections were caused by individuals in $E$. 
For the sake of simplicity, we refer below to partially compliant or 
noncompliant as partially compliant. In addition, the model includes a differential inclusion for the infection 
rate function $\beta$. The populations and $\beta$ are functions of time, which is measured in days.  
The compartmental structure of the model  is depicted in Figure~\ref{fig:compartmentalchart}.

\vskip2pt
 \hskip0.5cm
\unitlength 0.27mm
\begin{picture}(60,120)
\put( 0,-10){\framebox(40,40)}
\put(-28,16){$p_{S}$}
\put(-34,7){\vector(1,0){30}}
\put(16,6){$S$}
\put(6,-40){$\mu $}
\put(10,-15){\vector(0,-1){15}}
\put( 80,-54){\framebox(40,40)}
\put( 80,36){\framebox(40,40)}
\put(45,10){\vector(1,1){30}}
\put(92,52){$E_{fc}$}
\put(92,-36){$E_{pc}$}
\put(48,34){$\theta\, \Gamma$}
\put(18,-24){$(1-\theta)\,\Gamma$}
\put(45,10){\vector(1,-1){30}}
\put(120,-66){$\mu $} 
\put(120,22){$\mu $} 
\put(44,66){$p_{Efc}$}
\put(44,-56){$p_{Epc}$}
\put(115,-58){\vector(0,-1){15}}
\put(48,58){\vector(1,0){28}}
\put(48,-44){\vector(1,0){28}}
\put(115,30){\vector(0,-1){15}}
\put(100,12){\vector(0,1){22}}
\put(100,12){\vector(0,-1){22}}
\put(80,10){$\gamma_{E}$}
\put(100, 95){\line(1,0){260}}
\put(100, 76){\vector(0,1){19}}
\put(100, -82){\line(1,0){260}}
\put(100, -58){\vector(0,-1){24}}
\put( 185,-54){\framebox(40,40)}
\put( 185,36){\framebox(40,40)}
\put(197,52){$I_{fc}$}
\put(197,-36){$I_{pc}$}
\put(143,58){$\gamma_{fc}$}
\put(151,-26){$\gamma_{pc}$}
\put(164,2){$\gamma^{+}$}
\put(154,26){$\gamma^{-}$}
\put(122,42){\vector(1,-1){60}}
\put(122,-20){\vector(1,1){60}}
\put(220,-66){$\mu+d_{I} $} 
\put(205,22){$\mu+d_{I} $} 
\put(153,78){$p_{Ifc}$}
\put(153,-58){$p_{Ipc}$}
\put(215,-58){\vector(0,-1){15}}
\put(151,70){\vector(1,0){27}}
\put(151,-46){\vector(1,0){27}}
\put(125,52){\vector(1,0){55}}
\put(125,-32){\vector(1,0){55}}
\put(200,30){\vector(0,-1){15}}
\put(230,52){\vector(1,-1){42}}
\put(230,-32){\vector(1,1){42}}
\put(249,38){$\delta_{}$}
\put(249,-28){$\delta_{}$}
\put(203, 76){\vector(0,1){19}}
\put(209,84){$\sigma_{I}$}
\put(200, -58){\vector(0,-1){24}}
\put(194,12){\vector(0,1){22}}
\put(194,12){\vector(0,-1){22}}
\put(195,0){$\gamma_{I}$}
\put( 275,-10){\framebox(40,40)}
\put(285,6){$H$}
\put(299,46){$p_{H}$}
\put(295,58){\vector(0,-1){25}}
\put(319,7){\vector(1,0){28}}
\put(295,-12){\vector(0,-1){15}}
\put(301,-20){$\mu+d_{H} $}
\put( 350,-10){\framebox(40,40)}
\put(360,6){$R$}
\put(320,16){$\sigma_{H} $}
\put(78,85){$\sigma_{E}$}
\put(78,-70){$\sigma_{E}$}
\put(182,-70){$\sigma_{I}$}
\put(380,-20){$\mu $}
\put(384,46){$p_{R}$}
\put(380,58){\vector(0,-1){25}}
\put(375,-14){\vector(0,-1){15}}
\put(360,95){\vector(0,-1){62}}
\put(360,-82){\vector(0,1){70}}
\end{picture}
\vskip50pt
\begin{figure}[ht]
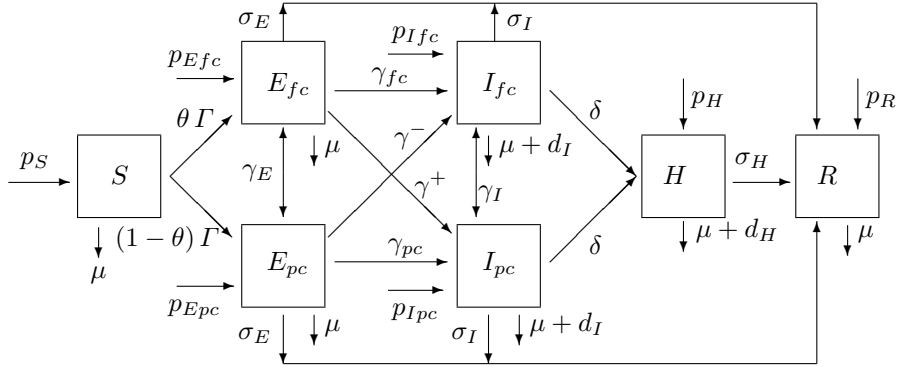

   \centering
   \caption{\small Compartmental structure and  flow chart for the COVID-19  model; 
   $\mu$ is the natural death rate; $p_{S}, \dots, p_{R}$ are influxes of individuals
   from outside; $\gamma_{fc}, \dots, \gamma_{I}$,  
   are infection rates; $\delta$ is the hospitalization rate; $\sigma_{E}, \sigma_I, \sigma_{H}$ are recovery rates; 
   and $d_I, d_{H}$ are the disease death rates; 
   $\Gamma$ is  the {\it force of infection} given in (\ref{eqn29}).}
   \label{fig:compartmentalchart}
\end{figure}

The susceptible subpopulation $S(t)$ consists of those who are healthy (only in terms of 
the COVID-19, while they may have other health conditions) and can become sick, or do not belong to  
any of the other groups.  The subpopulations $E_{fc}(t)$ and $E_{pc}(t)$ denote the current numbers 
of  those exposed to COVID-19 who fully comply with the disease control directives and those 
who do not or do so partially, respectively. The subpopulation $E_{pc}(t)$ includes also those who because of work
or other circumstances cannot fully follow the shelter-in-place and the other directives. Furthermore, 
the subpopulations $E_{fc}(t)$ and $E_{pc}(t)$ consist of: asymptomatics -- those who do not become 
sick; pre-symptomatics -- those who in 5-14 days develop clinical symptoms; and those who 
do not show symptoms or are mildly symptomatic but are not tested and documented.  
It is established that both subpopulations carry the virus and can infect susceptibles
 and as noted above, they include the latent individuals who cannot infect others and
also those who can.

The subpopulations $I_{fc}(t)$ and $I_{pc}(t)$  consist of documented infected individuals 
who are fully compliant or partially compliant, respectively, with no, mild, or medium 
clinical symptoms.  $H(t)$ denotes those whose symptoms are severe or 
critical and are hospitalized. 
Finally, $R(t)$ denotes the individuals who recovered 
from COVID-19 and as of Aug.  2020, there are approximately 9 million such individuals in the world. 
We assume that recovered individuals have temporal immunity and cannot become infected again. 
Moreover, at this stage it is not known how long this immunity lasts. 
If evidence of reinfection emerges, it is straightforward 
to modify the model to add reinfection pathways in the compartmental structure, 
possibly with delays.

 Next, we define the parameters of the model. 
The total living population at time $t$ is given by 
\[
N(t)=S(t)+E_{fc}(t)+E_{pc}(t)+I_{fc}(t)+I_{pc}(t)+H(t)+R(t).
\]

We denote by $p_{S}(t)$ the number of susceptible individuals that are added each 
day by birth or by travel from outside of the region of interest. 
We let $p_{Efc}(t)$ and $p_{Epc}(t)$
be the per day influx  of fully compliant and partially compliant
exposed, respectively, and note that this influx (mostly by air travel) seems 
to be one of the main reasons why COVID-19 spread so quickly around the globe
(see, e.g., \citep{virusspread}). 
Next, $p_{Ifc}(t)$ and $p_{Ipc}(t)$ denote the per day influxes of fully compliant infected and partially 
compliant infected populations, respectively. Then,  $p_{H}(t)$ is the number of those who arrive 
sick on day $t$ and need hospitalization, and $p_{R}(t)$ is the influx of recovered individuals.

The `natural' death rate coefficient (in the absence of COVID-19) of the population, 
$\mu\, (1/day)$, can easily be obtained from the demographic information of the country 
and can be considered as given.

Next, the {\it contact rate} (or the infection rate coefficient) $\beta(t)\, (1/day)$ 
at time $t$ is  the average number of contacts sufficient for transmission. The rates of infection of a 
susceptible by a fully or partially compliant asymptomatic, a  fully or partially compliant infected, 
or a hospitalized results in infection, respectively, are  
 \[
 \epsilon_{Efc}\beta  ,\quad \epsilon_{Epc}\beta,\quad \epsilon_{Ifc}\beta,\quad
 \epsilon_{Ipc}\beta,\quad \epsilon_{H}\beta,
 \]
$(1/day)$. The $\epsilon$'s are nonnegative dimensionless infection rate modification 
constants. When the isolation is very effective, the first and the last three are
likely to be small, since contacts are discouraged, and $\epsilon_{H}$ should be 
kept small in every hospital environment. 

The rate $\Gamma$, the so-called {\it force of infection}, is given by
\[
\Gamma=\frac{\beta}N (\epsilon_{Efc} E_{fc} +\epsilon_{Epc} E_{pc}+
\epsilon_{Ifc}I_{fc} + \epsilon_{Ipc} I_{pc} + \epsilon_H H),
\]
and measures the rate at which susceptibles get infected. We note that to model the case
when $E$ represents only those that are latent and cannot infect others, one has to set 
$\epsilon_{Efc}=\epsilon_{Epc}=0$.
The rate, per day,  at which the fraction $\theta $ of the susceptibles  
becomes exposed fully compliant is $\theta \Gamma S$, 
and the rest becomes exposed partially compliant at the rate $(1-\theta)\Gamma S$.
Moreover, we allow fully compliant exposed to become 
partially compliant exposed and vice versa, with rate constant $\gamma_{E}$. This may be 
triggered by a change in their views or their economic situation. The rate constant 
$\gamma_I$ has a similar interpretation. The 
parameters $\gamma_{fc}, \gamma_{pc}\ (1/day)$ denote the rate constants of 
development of clinical symptoms in fully compliant and partially compliant exposed 
individuals, while $\gamma^{-}, \gamma^{+}$ are the rate constants of clinical symptoms
in fully compliant who become partially compliant and infected (hopefully small), 
and those partially compliant 
exposed who become fully compliant infected. The model assumes that there is no 
disease-induced death in the exposed populations. Next, the rate constant  at which 
the fully compliant infected and the partially compliant infected individuals 
need hospitalization is  assumed to be the same, $\delta$. The additional 
disease-induced death for the infected and  hospitalized are $d_{I}$ and $d_{H}$, respectively. 
To complete the flow chart, we denote by $\sigma_{E}, 
\sigma_{I}$ and $\sigma_H$ the recovery rates of the exposed,
infected and hospitalized, respectively. 

Finally, the model takes into account the changes in the contact number, which also includes the
virulence (infectivity) and other effects related to changes in contact,
of the coronavirus via a novel differential inclusion for the infection rate coefficient 
$\beta$. These changes may be due to the changes in social behavior, changes in weather (allowing more
people to be outside where the infection rates are smaller), or minor mutations of the virus.
We assume that $\beta$ increases in proportion
to the infected population, with rate coefficient $\delta_{*}\geq 0$, and 
decreases in proportion to the population immunity and the number of those
recovered, with rate constant $\delta^{*}\geq 0$.  
This is expressed in the differential inclusion (\ref{eqn210}) and a detailed explanation 
can be found below. However, this differential inclusion is {\it ad-hoc} and different 
formulations will be investigated in the future when  
 a deeper understanding of the virus dynamics emerges. Moreover, it may be of interest to add in the future
 the effects of vaccination.

With the notation and assumptions given above, our model for the COVID-19 pandemic is defined below.
\begin{model}
\label{model}
Find eight functions $(S, E_{fc}, E_{pc}, I_{fc}, I_{pc}, H, R, \beta)$, defined on $[0,T]$, that satisfy the
following system of ODEs,
\begin{align}
\frac{dS}{dt} & = p_{S}- \Gamma S -\mu S, \label{eqn21} \\
\frac{dE_{fc}}{dt} & =p_{Efc}+ \theta \Gamma S  -\left(\gamma_{fc}+\gamma^{+} 
+\sigma_{E}+\mu - \frac{\gamma_{E}}{N} E_{pc}\right) E_{fc},  \label{eqn22}\\
\frac{dE_{pc}}{dt} & =p_{Epc}+ (1-\theta)  \Gamma S  -\left(\gamma_{pc}+\gamma^{-} 
+\sigma_{E} + \mu + \frac{\gamma_{E} }{N}E_{fc}\right) E_{pc},  \label{eqn23}\\
\frac{dI_{fc}}{dt} & =p_{Ifc}+  \gamma_{fc}E_{fc} +\gamma^{-}E_{pc}
- \left(\delta+\sigma_{I}  +d_{I} +  \mu - \frac{\gamma_{I}}N  I_{pc} 
\right) I_{fc}, \label{eqn24}\\
\frac{dI_{pc}}{dt} & =p_{Ipc}+  \gamma_{pc}E_{pc} +\gamma^{+}E_{fc}
- \left(\delta+\sigma_{I}  +d_{I} +  \mu + \frac{\gamma_{I}}N  I_{fc}\right) I_{pc},  \label{eqn25}\\
\frac{dH}{dt} & =p_{H}+ \delta( I_{fc} +  I_{pc})- 
( \sigma_H +d_H+\mu) H,  \label{eqn26}\\
\frac{dR}{dt} & = p_{R} + \sigma_{E}(E_{fc} + E_{pc})+ \sigma_{I}(I_{fc}
+ I_{pc})+\sigma_H H -\mu R, \label{eqn27}\\
N &=S +E_{fc} +E_{pc} +I_{fc} +I_{pc} +H +R,
\label{eqn28}\\
\Gamma&=\frac{\beta}{N}(\epsilon_{Efc} E_{fc} +\epsilon_{Epc} E_{pc}+
\epsilon_{Ifc}I_{fc} + \epsilon_{Ipc} I_{pc} + \epsilon_H H) ,   \label{eqn29} \\
\frac{d\beta}{dt} & =\delta_{*}\Gamma - \delta^{*}\beta \frac{R}N -\zeta,  
\quad \zeta\in \partial I_{[\beta_*,\beta^*]}(\beta), \label{eqn210}
\end{align}
together with the initial conditions,
\[
S(0)= S_0,\,  E_{fc}(0)=E_{fc0},\, E_{pc}(0)=E_{pc0},\,
I_{fc}(0)=I_{fc0},\, I_{pc}(0)=I_{pc0},
\]
\begin{equation}\label{eqn211}
 H(0)=H_{0},\,R(0)=R_0,
\end{equation}
\[
\beta(0)=\beta_{0}\in [\beta_{*}, \beta^{*}].
\]
\end{model}
Here, $S_0>0$ is the initial population at the breakout of the pandemic; 
$E_{fc0}, E_{pc0}$, $I_{fc0}, I_{pc0}, H_{0}$ and $R_0$  are nonnegative initial subpopulations, 
and $\beta_{0}$ is the initial transmission rate which can be estimated from the data.

In practice, one typically assumes that $S_0=N(0)>0$, so that at the
start of the pandemic there are only susceptibles, and all the other populations vanish.
However, for the sake of generality, we allow initially the other subpopulations to be nonnegative.
 Specifically, if the starting point is later than
the very first day of the pandemic, 
\[
N(0)=S_{0}+E_{fc0}+ E_{pc0}+I_{fc0}+ I_{pc0}+ H_{0}+R_0,
\]
so that (\ref{eqn28}) holds initially.
A summary of the definitions of the model parameters is given in Table\,\ref{tab:1a}.
\medskip

Equation (\ref{eqn21}) describes the rate of change, per day,  of the susceptible population. 
The second term on the right-hand side is the rate at which the susceptibles become infected 
by contact with  exposed, infected, and hospitalized individuals. We emphasize that  
the exposed populations include the asymptomatics that  may cause infection.
 The last term describes the `natural' (unrelated to the pandemic) mortality. 

The rest of the equations, except  (\ref{eqn210}), have a similar structure and interpretation.  
For instance, in (\ref{eqn26}) the rate of change, per day, of the
hospitalized is the sum of those who arrive from outside (say overflow in other locations),
$p_{H}$, and   fully compliant and partially compliant 
infectives whose illness becomes severe and need hospitalization, 
$\delta(I_{fc} + I_{pc})$, minus those who recovered 
on that day, $\sigma_{H}H$, those who died naturally,  $\mu H$, and those who died 
because of the pandemic, $d_{H} H$.  

We note that in our model, the usual expression for 
the probability that one susceptible is infected (per day), $\beta I/N$,
is replaced with $\Gamma$. This means that the values of $\beta$ must be restricted,
which we describe next.

Next, we describe equation~(\ref{eqn210}), actually a differential inclusion, for the change in the
infectiveness of the SARS-CoV-2 virus. We assume  a `simple' linear 
relationship between the infection rate coefficient $\beta$ and the fractions of those
who carry the virus and the recovered. In particular, an increase in the fraction who have the virus  and can infect others, $(E_{fc}+E_{pc}+I_{fc}+I_{pc}+H)/N$, is assumed to
make the virus more infective, while an increase in the fraction of recovered, $R/N$, is assumed to
increase the immunity of the population leading to a decrease in the disease virulence. 
Finally, we use a subdifferential (explained below) to guarantee that $\beta$ remains in the admissible set $[\beta_{*},\beta^*]$, where $\beta_*$ and $\beta^*$ are appropriate values such that $0\leq \beta_*<\beta^*$.
Let $I_{[\beta_*,\beta^*]}(\beta)$ be the indicator function of the interval
$[\beta_*,\beta^*]$, which vanishes when $\beta$ is in the interval and has the value
$+\infty$, otherwise.
Then, its subdifferential, $\partial I_{[\beta_*,\beta^*]}(\beta)$, is the set-valued function or multifunction 
\begin{equation}
\label{subdiff}
\partial I_{[\beta_*,\beta^*]}(\beta)=
 \begin{cases}
(-\infty, 0]      & \text{ if } \beta=\beta_*, \\
   0   & \text{ if } \beta_{*}< \beta < \beta^{*}, \\
  [0, \infty)      & \text{ if } \beta=\beta_*, \\
   \emptyset   & \text{otherwise}.
\end{cases}
\end{equation} 
Adding this term guarantees that
$\beta_{*}\leq \beta(t)\leq \beta^{*}$ for $0\leq t\leq T$. Indeed, when $\beta \in (\beta_*,\beta^*)$
then $\zeta=0$ and (\ref{eqn29}) is just a rate equation. When $\beta(t)=\beta_{*}$ 
then there exists an element $-\zeta \in [0, \infty)$ that prevents $\beta$ from 
becoming smaller than $\beta_{*}$; and when $\beta(t)=\beta^{*}$ 
there exists an element $-\zeta \in (-\infty, 0]$ that prevents $\beta$ from exceeding
$\beta^{*}$.

The number of active `documented' or identified cases $A(t)$, at time $t$,  is
\begin{equation*}
\label{eqn212}
A(t)=I_{fc}(t)+I_{pc}(t)+H(t),
\end{equation*}
and, following \citep{Worldometer}, we denote by $A_{m}(t)=I_{fc}(t)+I_{pc}(t)$ 
and $A_{c}(t)=H(t)$ those 
with an active mild condition, documented but not serious, which can be with no symptoms, mild or medium symptoms,
and those with serious or critical conditions, 
respectively.

Next, $CC(t)$, the cumulative or total number  of cases, up to time $t$, is
\begin{equation}
\label{eqn215}
CC(t) = A(t) + CR(t) + CD(t),
\end{equation}
where the cumulative number of the recovered $CR$ is given by
\begin{align*}
\label{eqn216}
CR(t) =R_0+ \int_0^t & \Big(\sigma_{E}\big(E_{fc}(\tau) + E_{pc}(\tau)\big)\\ 
& + 
 \sigma_{I}\big(I_{fc}(\tau)  + I_{pc}(\tau)\big)
+ \sigma_{H}H(\tau)\Big) d\tau;
\end{align*} 
the daily number of deaths caused by the disease is
\begin{equation*}
\label{eqn217}
D(t)= d_{I}( I_{fc}(t) +   I_{pc}(t))+ d_H H(t);
\end{equation*}
and the cumulative or total number of deaths $CD(t)$ caused by the disease  is 
\begin{equation}
\label{eqn218}
CD(t)  = D_0 + \int_{0}^{t}D(\tau)\,d\tau.
\end{equation}
The number of new cases on day $t$ is given by 
\begin{equation}
\label{eq:Cday}
C_{day}(t)=CC(t)-CC(t-1).
\end{equation}

Finally, the cumulative number $CA(t)$ of those who recovered from the asymptomatic population is given by
\begin{equation}
\label{eq:CA}
CA(t)= \sigma_{E} \int_0^t \left(E_{fc}(\tau) + E_{pc}(\tau) \right) d\tau,
\end{equation}
assuming that the initial number of the asymptomatic who recovered is zero.

 In this way, the model provides insight into an important aspect of the pandemic that is 
 very difficult to obtain in the field.

\section{Existence and uniqueness of the solution}
\label{sec-exist}
Proving the existence of  a unique solution to the model for each finite time interval is 
a mathematically important next step. Without the
differential inclusion (\ref{eqn210}) and when $\theta$ is a known
continuous and bounded function, the local existence in time follows from the fact that
the functions on the right-hand sides of the equations are locally Lipschitz. Then, 
the existence of the global solution is established by demonstrating that the 
solution stays bounded on every finite time interval.  This includes showing that given 
nonnegative initial conditions, the solution components stay nonnegative, which is not 
only necessary mathematically but also important for the model to be biologically 
relevant.  However, this is not the case in our model. Specifically, we allow $\theta$ 
to be bounded and piecewise continuous, hence not continuous, and $\beta$ is not given
but a solution of a differential inclusion. Therefore, the standard way 
outlined above no longer works, requiring a more sophisticated approach that 
we detail in what follows.

We assume that the input functions $p_{S}(t),\dots, p_{R}(t)$ are bounded, nonnegative, and
smooth and all the parameters, except for $\gamma_E$ and $\gamma_I$, are positive constants.

First, we assume that
$\theta(t)$ is given and $0\leq \theta(t)\leq 1$ for all $t\in [0,T]$, and $\beta$ is a given smooth 
function with values in $[\beta_*, \beta^*]\subset (0,1]$. 
Below, we relax the assumption that $\beta$ is given.

\paragraph{Positivity of $E_{fc}; E_{pc}; I_{fc}; I_{pc}; H; R$. } 
We first establish the non-negativity of the solutions of the system when
the initial data is non-negative (except for $S_0>0$). To that end, let $S_0>0$ and
assume that the other initial conditions  satisfy
$E_{fc0}, E_{pc0}, I_{fc0}, I_{pc0}, H_0, R_0\geq \varepsilon>0$,
and also $p_R  \geq \varepsilon$. Below, we let $\varepsilon\to 0$.
Thus,  there exists $T_1>0$, which we may assume to be the largest time,
such that the local solution exists and is unique on $[0, T_1)$. The 
continuity of the solution implies that there is a maximal time $t_*$, satisfying 
$0<t_*\leq T_1$, which may depend on $\varepsilon$, such that  the solution is 
component-wise positive on $[0, t_*)$.
Moreover,  we have that $N>0$ on $[0, t_*]$, since otherwise if $N(t_*)=0$ then
by (\ref{eqn28}), 
\[
S(t_*)=E_{fc}(t_*)=E_{pc}(t_*)=I_{fc}(t_*)=I_{pc}(t_*)=H(t_*)=R(t_*)=0,
\]
and it follows from (\ref{eqn27}) that at $t=t_*$,
\[
\frac{dR}{dt}  \geq  \varepsilon>0,
\]
which means that $R$ is an increasing function at $t_*$, and since  $R(t)>0$
on $[0, t_*)$, it is positive on $[0, t_*]$, and then $N>0$ on $[0, t_*]$.

Then (\ref{eqn29}) implies that
\begin{equation}
\label{eqn217a}
0\leq \Gamma  < 5 \beta ^{\ast } \max \left( \epsilon
_{E_{fc}},\epsilon _{E_{pc}},\epsilon _{I_{fc}},\epsilon _{I_{pc}},\epsilon
_{H}\right):= \Gamma^*, 
\end{equation}
since $E_{fc}, E_{pc},I_{fc}, I_{pc}, H<N$ and $\beta\leq \beta^{*}$  on $[0, t_*]$.
It follows from (\ref{eqn21}) that
\[
\dfrac{dS}{dt}\geq p_S-(\Gamma^*+\mu)S,
\]
on $[0, t_*]$. Since $S_0>0$ and $p_S\geq0$, then on $[0, t_*]$,
\[
S(t)\geq S_0 \exp\left(-(\Gamma^*+\mu)t\right)\geq S_0\exp(-(\Gamma^*+\mu)t_*)>0.
\]

Next, we introduce the following notation for the sake of simplicity,
\begin{align}
\widehat{\gamma}_{fc} &:= \gamma_{fc}+\gamma^{+} +\sigma_{E}+\mu, \label{eq:hat1}\\
\widehat{\gamma}_{pc} &:=\gamma_{pc}+\gamma^{-} +\sigma_{E} + \mu,\label{eq:hat2}\\
\widehat{\delta}&:= \delta+\sigma_{I}  +d_{I} +  \mu,\label{eq:hat3}
\end{align}
and assume that the parameters satisfy
\begin{equation}\label{eq:assumption}
    |\gamma _{E}|<\min\{\widehat{\gamma}_{fc}, \widehat{\gamma}_{pc}\}, \quad |\gamma _{I}| <\widehat{\delta},
\end{equation}
which is necessary to prove boundedness but also makes the model biologically relevant.

Next, we consider (\ref{eqn22}) and since $\theta \Gamma S\geq 0$ and $E_{pc}< N$, we obtain  
\[
\frac{dE_{fc}}{dt} \geq p_{Efc}  -(\widehat{\gamma}_{fc}+ |\gamma_E|)E_{fc},
\]
and since $p_{Efc}\geq 0$, we find that  $E_{fc}(t)\geq E_{fc0}\exp(-(\widehat{\gamma}_{fc} +|\gamma_E|)t_*)>0$
on $[0, t_*]$.
We now consider (\ref{eqn23}), and since $p_{Epc}\geq 0$, $(1-\theta)\Gamma S\geq 0$ 
and $E_{fc}<N$, we have
\[
\frac{dE_{pc}}{dt} \geq p_{Epc}  -\left(\widehat{\gamma}_{pc} + |\gamma_{E}| \right) E_{pc},
\]
therefore, $E_{pc}(t)\geq E_{pc0}\exp\left(-(\widehat{\gamma}_{pc} 
+ |\gamma_{E}|)t_* \right)>0$ on $[0, t_*]$.

The arguments concerning (\ref{eqn24}) and (\ref{eqn25}) are similar.  Since
$p_{Ifc}\geq 0$, by omitting all the nonnegative terms in (\ref{eqn24}), we find,
\[
\frac{dI_{fc}}{dt} \geq p_{Ifc}  -(\widehat{\delta}+|\gamma_I|) I_{fc},
\]
and hence, $I_{fc}(t)\geq I_{fc0}\exp(-(\widehat{\delta}+|\gamma_I|)t_*)>0$ on $[0, t_*]$.
Similarly, it follows from (\ref{eqn25}) that
$I_{pc}(t)\geq I_{pc0}\exp(-(\widehat{\delta}_{fc}+|\gamma_I|)t_*)>0$ on $[0, t_*]$.

We turn to (\ref{eqn26}). Since $p_{H}\geq 0$ and $\delta_{fc} I_{fc} 
+ \delta_{pc} I_{pc}>0$, we have
\[
\dfrac{dH}{dt}\geq  -  ( \sigma_H +d_H+\mu) H,
\]
and so $H(t)\geq H_0\exp\left(-(\sigma_H +d_H+\mu)t_*\right)>0$ on $[0, t_*]$. 
Finally, it follows from (\ref{eqn27}) that 
\[
\dfrac{dR}{dt}> -\mu R,
\]
and hence $R(t)>R_0\exp(-\mu t_*)>0$ on $[0, t_*]$.

We note that all the decay rates in the estimates above depend on the problem data but
are independent of $\varepsilon$, then all the variables 
$S(t),\dots , R(t)$, as well as $N(t)$ and $\Gamma(t)$, are
positive on the closed interval $[0, t_*]$, and since the solution is continuous, we conclude that 
they are strictly positive on $[0, T_*]$
for some  $t_*< T_*<T_1$,  and by extension $T_*=T_1$.  
Indeed, if $S(t_*)=\alpha_S>0$ and $S$ is defined and continuous at $t_*$, then there exists an interval
$(t_*-\xi, t_*+\xi)$, for some $\xi>0$, such that $S(t)\geq \alpha_S/2$ on $(t_*-\xi, t_*+\xi)$.

We conclude that when $\varepsilon>0$, i.e., the initial
conditions are positive, then the solution is positive as long as it exists.
However, it is noted that $T_*$ does not depend on $\varepsilon$, and therefore, we have the following summary.
 \begin{proposition}
 \label{prop1}
 Assume that \eqref{eq:assumption} holds, $S_0>0$, the other initial conditions are nonnegative, 
 and $\theta$ and $\beta$ are bounded and positive. Then, the solution of system (\ref{eqn21})-(\ref{eqn27}) 
 is positive as long as it exists.
 \end{proposition}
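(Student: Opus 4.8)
The plan is to exploit the \textbf{quasi-positive} structure of the system: in each of the equations (\ref{eqn21})--(\ref{eqn27}), every term carrying a minus sign is proportional to the very unknown whose evolution that equation governs, while all the remaining terms are nonnegative whenever the components are nonnegative and $\theta,\beta,\Gamma\ge 0$. Concretely, the loss terms in the $E_{fc}$ equation (including the bilinear coupling $\tfrac{\gamma_E}{N}E_{pc}E_{fc}$) all factor through $E_{fc}$, and analogously for the other compartments. This is exactly the tangency condition that makes the nonnegative orthant invariant, so the result should follow once $\Gamma$ is known to be well-defined and bounded along the trajectory. To upgrade invariance to strict positivity, however, I would argue by perturbation together with a maximal-interval contradiction.

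First I would fix $\varepsilon>0$, replace the nonnegative initial data by $E_{fc0},\dots,R_0\ge\varepsilon$ together with $p_R\ge\varepsilon$, and let $t_*\le T_1$ be the largest time on which the local solution is component-wise positive. The first genuine step is to show $N>0$ on $[0,t_*]$: if $N(t_*)=0$, then (\ref{eqn28}) forces every compartment to vanish at $t_*$, but then (\ref{eqn27}) gives $\tfrac{dR}{dt}\ge p_R\ge\varepsilon>0$ at $t_*$, so $R$ is strictly increasing there and cannot have just returned to $0$ from the positive values it takes on $[0,t_*)$ — a contradiction. With $N>0$ secured, (\ref{eqn29}) yields the uniform bound $0\le\Gamma<\Gamma^*$ of (\ref{eqn217a}), since each infectious compartment lies strictly below $N$ and $\beta\le\beta^*$.

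Next I would treat the compartments one at a time. Dropping the nonnegative source and transmission terms and using $E_{pc}/N<1$, $I_{pc}/N<1$, etc., each equation reduces to a scalar differential inequality $\tfrac{dx}{dt}\ge -c\,x$ with a constant $c$ built only from the rate parameters; for instance (\ref{eqn22}) gives $\tfrac{dE_{fc}}{dt}\ge -(\widehat{\gamma}_{fc}+|\gamma_E|)E_{fc}$. This is where assumption (\ref{eq:assumption}) enters: it guarantees that the effective decay coefficients such as $\widehat{\gamma}_{fc}-\tfrac{\gamma_E}{N}E_{pc}$ remain controlled (indeed positive), so the coupling cannot swamp the linear loss. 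A Gronwall/integrating-factor estimate then delivers strictly positive exponential lower bounds on the closed interval $[0,t_*]$, e.g. $E_{fc}(t)\ge E_{fc0}\exp(-(\widehat{\gamma}_{fc}+|\gamma_E|)t_*)>0$, and the same for $E_{pc},I_{fc},I_{pc},H,R$, while $S$ is bounded below using only $\Gamma\le\Gamma^*$, giving $S(t)\ge S_0\exp(-(\Gamma^*+\mu)t_*)>0$. Since all components are then strictly positive at $t_*$ itself, continuity lets me extend positivity slightly beyond $t_*$, contradicting maximality unless $t_*=T_1$; hence positivity holds on the whole existence interval.

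Finally I would let $\varepsilon\to 0$. The constants $c$ appearing in these lower bounds depend only on the fixed model parameters and on $t_*\le T_1$, not on $\varepsilon$, so the estimates survive the limit and yield nonnegativity of $E_{fc},\dots,R$ (and strict positivity of $S$ from $S_0>0$) for merely nonnegative initial data, as claimed. The main obstacle I anticipate is precisely the bilinear coupling terms $\pm\tfrac{\gamma_E}{N}E_{fc}E_{pc}$ and $\pm\tfrac{\gamma_I}{N}I_{fc}I_{pc}$, which enter the paired equations with opposite signs and could a priori drive a compartment negative; the remedy is to observe that in each equation the coupling term is still proportional to that equation's own unknown, so it only renormalizes the bounded decay rate, and (\ref{eq:assumption}) prevents that rate from changing sign.
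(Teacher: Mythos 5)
Your proposal is correct and follows essentially the same route as the paper: the $\varepsilon$-perturbation of the initial data and of $p_R$, the maximal-interval argument for component-wise positivity, the contradiction at $N(t_*)=0$ via $\tfrac{dR}{dt}\ge\varepsilon$, the uniform bound $\Gamma<\Gamma^*$, the exponential lower bounds obtained by dropping nonnegative terms and absorbing the bilinear couplings into $|\gamma_E|,|\gamma_I|$, and the final passage $\varepsilon\to 0$ using the $\varepsilon$-independence of the decay constants. The only cosmetic difference is your opening framing in terms of quasi-positivity of the vector field, which the paper does not state explicitly but which is implicit in its compartment-by-compartment estimates.
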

Now that it is established that under the assumptions of Proposition \ref{prop1}, the solution is positive as long as it exists, the only way it can cease to exist is when one or more of the variables approaches infinity in finite time.  

\paragraph{Boundedness of $S, E_{fc}, E_{pc}, I_{fc}, I_{pc}, H, R$ on finite intervals}
In the second step, under the assumptions of Proposition \ref{prop1}, we show that each of 
$S(t),\dots, R(t)$ is bounded on every finite time interval, and therefore, the solution cannot approach infinity in finite time. 
This is sufficient to show that the solution exists on each finite time interval. We begin with the interval
of existence, $[0, T_1)$.

We note that since $p_{S}(t)$ is bounded and $\Gamma>0$, equation 
(\ref{eqn21}) shows that $S(t)$ is bounded on every finite time interval it exists on.
Using the definition of $\widehat{\gamma}_{1}$ given by \eqref{eq:hat1} in equation 
(\ref{eqn22}), $0\leq \theta \leq 1$, the bound on $\Gamma$ given by \eqref{eqn217a},
and $0< {E_{pc}}/{N}<1$,  yield
\begin{equation*}
\label{eqn32a}
\frac{dE_{fc}}{dt}   <p_{E_{fc}}+\Gamma^{*} S- (\widehat{\gamma}_{fc}-|\gamma _{E}|)E_{fc}.
\end{equation*}
Then, by assumption~\eqref{eq:assumption} that  $|\gamma_E|<\widehat{\gamma}_{fc}$ and since $p_{E_{fc}}$ 
and $S$ are bounded, it follows that $E_{fc}$ is bounded. 
We now consider (\ref{eqn23}). Using the definition of $\widehat{\gamma}_{pc}$ given by 
\eqref{eq:hat2},  $0\leq 1-\theta\leq 1$, and \eqref{eqn217a},
\begin{equation*}
\label{eqn33a}
\frac{dE_{pc}}{dt} < p_{E_{pc}}+\Gamma^{*} S- (\widehat{\gamma}_{pc}-|\gamma_E|) E_{pc}. 
\end{equation*}
Since $p_{E_{pc}}$ and $S$ are bounded and  $\widehat{\gamma}_{pc}>|\gamma_E|$, it follows that $E_{pc}$ is bounded.  
Next, using the definition of $\widehat{\delta}$ from \eqref{eq:hat3} and $0<I_{pc}/N <1$ in (\ref{eqn24}) and (\ref{eqn25}), we obtain 
\begin{align*}
\frac{dI_{fc}}{dt}& <p_{I_{fc}}+\gamma _{fc}E_{fc}+\gamma ^{-}E_{pc}-(\widehat{\delta} - |\gamma _{I}| )I_{fc}, \\
\frac{dI_{pc}}{dt}& <p_{I_{pc}}+\gamma _{pc}E_{pc}+\gamma ^{+}E_{fc}- (\widehat{\delta}-|\gamma_I|) I_{pc}.  
\end{align*}
Since $|\gamma _{I}|<\widehat{\delta}$,  \eqref{eq:assumption}, 
and  $p_{I_{fc}}, p_{I_{pc}}, E_{fc}$ and $E_{pc} $ are all bounded, the
boundedness of $I_{fc}$ and $I_{pc}$ follows.
The boundedness of $H$ and $R$ follows directly from equations 
\eqref{eqn26}, \eqref{eqn27} and the fact that $p_H, p_R$ and $E_{fc}, E_{pc}, I_{fc}$, $I_{pc}$ 
are all bounded independently of the choice of $\beta\in \left[ \beta _{\ast },\beta^{\ast }\right]$.

We summarize the result as follows.
 \begin{proposition}
 \label{prop2}
 Assume that $S_0>0$, the other initial conditions are nonegative, and $\theta$ and $\beta$ are 
 bounded and positive. Moreover, assume that
 \[
 |\gamma_E|<\min\{\widehat{\gamma}_{fc}, \widehat{\gamma}_{pc}\},\qquad |\gamma_I|<\widehat{\delta}.
 \]
  Then, the solution of system (\ref{eqn21})-(\ref{eqn27}) is positive and bounded on every finite interval.
 \end{proposition}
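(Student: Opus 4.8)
The plan is to read Proposition~\ref{prop2} as the synthesis of the positivity statement in Proposition~\ref{prop1} with a sequential \emph{a priori} bound, and then to close the existence claim via the standard ODE continuation theorem. First I would recall that, with $\theta$ and $\beta$ given, bounded and positive, the right-hand sides of \eqref{eqn21}--\eqref{eqn27} are locally Lipschitz, so a unique local solution exists on a maximal interval $[0,T_1)$. By Proposition~\ref{prop1} this solution is component-wise positive on $[0,T_1)$, hence $N>0$ and the \emph{a priori} bound $0\le\Gamma<\Gamma^{*}$ of \eqref{eqn217a} holds throughout.

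The heart of the argument is to bound the components one at a time, exploiting the triangular coupling of the system. I would first bound $S$: discarding the nonnegative loss term $-\Gamma S$ in \eqref{eqn21} gives $dS/dt\le p_S-\mu S$, and since $p_S$ is bounded a Grönwall estimate yields a bound for $S$ on any finite interval. With $S$ bounded, the $\Gamma^{*}S$ source terms in \eqref{eqn22}--\eqref{eqn23} are controlled, and after absorbing the cross terms $\pm(\gamma_E/N)E_{pc}E_{fc}$ via $0<E_{pc}/N<1$ the effective linear decay coefficients become $\widehat{\gamma}_{fc}-|\gamma_E|$ and $\widehat{\gamma}_{pc}-|\gamma_E|$, which are \emph{positive} precisely by assumption~\eqref{eq:assumption}; a second Grönwall argument then bounds $E_{fc}$ and $E_{pc}$. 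Feeding these into \eqref{eqn24}--\eqref{eqn25}, the same device turns the decay coefficient into $\widehat{\delta}-|\gamma_I|>0$ and bounds $I_{fc},I_{pc}$; finally the bounds on $E$ and $I$ make the sources in \eqref{eqn26}--\eqref{eqn27} bounded, so $H$ and $R$ are bounded as well. Crucially, each bound depends only on the data and on the \emph{length} of the interval, not on $T_1$.

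To conclude I would invoke continuation: a local solution that remains in a bounded set as $t\to T_1^-$ can be extended past $T_1$, so if $T_1$ were finite the solution could be prolonged, contradicting maximality. Since the estimates above hold on every finite subinterval of $[0,T_1)$ and stay finite at $T_1^-$, the solution cannot escape to infinity in finite time; hence it exists, is unique, and is positive and bounded on all of $[0,T]$.

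I expect the main obstacle to be the bookkeeping in the continuation step rather than any single estimate. One must verify that both the positivity of Proposition~\ref{prop1} and the boundedness obtained here are genuinely uniform up to $T_1$ — they are, because the decay rates $\widehat{\gamma}_{fc}-|\gamma_E|$, $\widehat{\gamma}_{pc}-|\gamma_E|$, $\widehat{\delta}-|\gamma_I|$ and the source bounds are all independent of $T_1$ — so that the blow-up alternative of the continuation theorem is truly excluded. The role of assumption~\eqref{eq:assumption} is exactly to guarantee that the cross-coupling terms cannot overwhelm the linear losses and produce a positive-feedback blow-up; without it the Grönwall step for $E_{fc},E_{pc}$ and $I_{fc},I_{pc}$ would fail.
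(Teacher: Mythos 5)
Your proposal is correct and follows essentially the same route as the paper: positivity from Proposition~\ref{prop1}, the bound $0\le\Gamma<\Gamma^{*}$, a sequential (triangular) bounding of $S$, then $E_{fc},E_{pc}$, then $I_{fc},I_{pc}$, then $H,R$, with assumption~\eqref{eq:assumption} supplying the positive effective decay coefficients $\widehat{\gamma}_{fc}-|\gamma_E|$, $\widehat{\gamma}_{pc}-|\gamma_E|$, $\widehat{\delta}-|\gamma_I|$, and finally the blow-up/continuation alternative to extend the solution to any finite interval. Your explicit remark that the estimates are uniform up to $T_1$ merely spells out what the paper leaves implicit.
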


Before proceeding with the existence proof, we note that 
$\Gamma(t)$ is bounded and positive and the boundedness results above 
and equation (\ref{eqn28}) imply that $N(t)$ is bounded and positive on every finite 
interval. Furthermore, the differential set-inclusion  \eqref{eqn210}  can be rewritten as
\begin{equation}
\frac{d\beta }{dt}+\delta ^{\ast }\frac{R}{N}\beta +\partial \phi \left(
\beta \right) \ni \delta _{\ast }\Gamma ,\qquad 
 \beta \left( 0\right) =\beta
_{0}\in \left[ \beta _{\ast },\beta ^{\ast }\right] ,   \label{eqn38}
\end{equation}
where
 $\phi(\beta)=I_{[\beta_*,\beta^*]}(\beta)$,
and $\partial \phi(\beta)$ is its subdifferential, \eqref{subdiff}.
Since $\phi$ in \eqref{eqn38} is convex and lower semicontinuous, its 
subdifferential $\partial \phi$ forces $\beta $ to remain in 
the interval $\left[ \beta _{\ast },\beta^{\ast }\right]$.

\vskip4pt
\paragraph{Existence and uniqueness}
To use the powerful tools of convex analysis for the existence and uniqueness proof, 
we first reformulate the problem in an abstract form. Letting
\[
\bx=(S, E_{fc}, E_{pc}, I_{fc}, I_{pc}, H, R),
\]
the system defined by equations  (\ref{eqn21})--(\ref{eqn27}) together with (\ref{eqn28}) and 
\eqref{eqn29}, can be written  as
\begin{equation}
\label{eqn39}
\bx ^{\prime }=\bbF\left( t,\bx, \beta, \theta \right),\quad  \bx\left( 0\right) =\bx _{0},
\end{equation}
where $\bbF$ is Lipschitz in  $\bx, \beta $ and $\theta$; $\beta $  is a given continuous function having
values in $\left[ \beta _{\ast },\beta ^{\ast }\right] $;
and $\theta$ and is bounded with values in $[0,1]$.

We first assume that $\theta$ is continuous. Below, we relax this 
assumption and allow it to be piecewise continuous. Let $\beta$ and $\hat{\beta}$ be two continuous functions 
such that $\beta(t), \hat{\beta}(t)\in \left[ \beta _{\ast },\beta ^{\ast }\right] $ 
for all $t\in \left[0,T\right] $. We let  $\bx $ and $\hat{\bx}$ be the solutions to \eqref{eqn39}
corresponding to $\beta $ and $\hat{\beta}$, respectively. Then, using straightforward
 computations, as in \citep{Ken17,Bellman95,Hale09}, we obtain
 a constant $C>0$, independent of $\beta$ and $\hat{\beta}$, such that
\begin{align*}
&\left\vert \bx \left( t\right) - \hat{\bx}\left( t\right)
\right\vert ^{2} \leq C\int_{0}^{t} \vert \bbF\left( s,\bx,\beta \right) -
\bbF\left( s,\hat{\bx},\hat{\beta}\right)  \vert ^{2}ds \\
&\leq 2C\int_{0}^{t}\left(  \vert \bbF\left( s,\bx,\beta
\right) -\bbF\left( s,\bx,\hat{\beta}\right)  \vert
^{2}+ \vert \bbF\left( s,\bx,\hat{\beta}\right) -\bbF
\left( s,\hat{\bx},\hat{\beta}\right)  \vert ^{2}\right) ds \\
&\leq 2CK^{2}\int_{0}^{t}\left(  \vert \beta -\hat{\beta} \vert
^{2}+\vert \bx-\hat{\bx}\vert ^{2}\right) ds.
\end{align*} 
Here, $K$ is an appropriate Lipschitz constant that depends on the estimates obtained 
above. Then, an application of Gr\"{o}nwall's inequality  shows that after modifying the
constants,
\begin{equation}
\left\vert \bx \left( t\right) -\hat{\bx}\left( t\right)
\right\vert ^{2}\leq CK^{2}\int_{0}^{t} \vert \beta -\hat{\beta}
 \vert ^{2}ds.  \label{eqn310}
\end{equation}

The Lipschitz continuity of $\bbF$ ensures the
existence and uniqueness of a solution to the initial value problem with
given $\beta $ and continuous $\theta$ (see e.g., Theorem 2.4 \citep{BD86}).

We summarize the result of the discussion above as follows. 
\begin{proposition}
\label{lem:existence1}
Assume that $\beta $ is a continuous function having values in 
$\left[\beta_{\ast },\beta ^{\ast }\right]$. Then, there exists a unique solution to the
initial value problem defined by (\ref{eqn21})--\eqref{eqn27}, together with \eqref{eqn28}
and \eqref{eqn29}, and the initial conditions \eqref{eqn211}.
Moreover, if $\bx,\hat{\bx}$ are two solutions corresponding to $\beta $ and $\hat{\beta},$ 
then  \eqref{eqn310} holds for a constant $C$ that is independent of $\beta $.
\end{proposition}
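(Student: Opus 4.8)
The plan is to assemble the three ingredients already prepared---positivity, a priori boundedness, and Lipschitz continuity---into a standard continuation argument, and then to read off the stability estimate \eqref{eqn310} from the computation preceding the statement. First I would fix a continuous $\theta$ with values in $[0,1]$ and a continuous $\beta$ with values in $[\beta_*,\beta^*]$, and verify that the right-hand side $\bbF$ of the abstract system \eqref{eqn39} is locally Lipschitz in $\bx$. The only genuinely delicate point here is the factor $1/N$ appearing in the force of infection \eqref{eqn29}: $\bbF$ is smooth in $\bx$ away from the set $\{N=0\}$, so to conclude local Lipschitz continuity I must keep the trajectory bounded away from that singular set. This is exactly what the positivity argument of Proposition~\ref{prop1} supplies, since it shows that $N(t)$ is strictly positive and bounded below on any interval of existence. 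With this in hand, the classical Picard--Lindel\"{o}f theorem (Theorem~2.4 of \citep{BD86}) yields a unique local solution on some $[0,T_1)$.

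Second, I would upgrade local existence to existence on all of $[0,T]$ by ruling out finite-time blow-up. The standard continuation principle states that if the maximal interval of existence $[0,T_1)$ were strictly contained in $[0,T]$, then $|\bx(t)|$ would have to become unbounded as $t\to T_1^{-}$. But Proposition~\ref{prop2} asserts precisely that each component of $\bx$ stays bounded on every finite interval under the hypothesis \eqref{eq:assumption}, so no such blow-up can occur; hence $T_1=T$ and the solution is global. Uniqueness on $[0,T]$ is inherited from the local Lipschitz continuity via the usual Gr\"{o}nwall argument.

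Finally, for the stability estimate I would take two continuous data $\beta,\hat{\beta}$ with values in $[\beta_*,\beta^*]$ together with their corresponding global solutions $\bx,\hat{\bx}$, write both in integral (Volterra) form, subtract, and split the integrand by the triangle inequality into a term measuring the variation in $\beta$ and a term measuring the variation in $\bx$. Applying the joint Lipschitz bound with constant $K$ gives exactly the three-line chain displayed just before the statement, and Gr\"{o}nwall's inequality then absorbs the $|\bx-\hat{\bx}|^2$ contribution to produce \eqref{eqn310}. The crucial observation---and what makes $C$ independent of $\beta$---is that the a priori bounds from Propositions~\ref{prop1} and \ref{prop2}, and therefore the Lipschitz constant $K$ and the lower bound on $N$, depend only on the data and on the fixed interval $[\beta_*,\beta^*]$, never on the particular choice of $\beta$. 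I expect this uniformity of the constants over $\beta\in[\beta_*,\beta^*]$ to be the main point requiring care, since it is the property that will later permit treating $\beta$ itself as an unknown governed by the differential inclusion.
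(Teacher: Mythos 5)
Your proposal is correct and follows essentially the same route as the paper: local existence via Picard--Lindel\"{o}f using the Lipschitz continuity of $\bbF$, globalization via the a priori positivity and boundedness of Propositions~\ref{prop1} and \ref{prop2} (which also keep $N$ bounded away from zero and make the Lipschitz constant $K$ uniform in $\beta$), and the triangle-inequality splitting plus Gr\"{o}nwall to obtain \eqref{eqn310}. The only difference is one of exposition---you make explicit the $1/N$ singularity and the continuation principle, which the paper leaves implicit.
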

Now, let $\beta \in C\left( \left[ 0,T\right] \right) $ such that $\beta \left(
t\right) \in \left[ \beta _{\ast },\beta ^{\ast }\right] $ for all $t$. 
Then, we use the solution that exists by Proposition~\ref{lem:existence1} to 
construct a solution to the evolution inclusion \eqref{eqn38}, based on the 
following well-known theorem of 
Br\'{e}zis \citep{brezis}. In our case the Hilbert space is $\mathbb{R}$ and we assume
that $\beta _{0}\in[\beta_{*}, \beta^{*}]$.
\begin{theorem}
\label{thm3}Let $H$ be a Hilbert space. Let $f\in L^{2}\left( 0,T;H\right) .$
Let $\phi $ be a lower semicontinuous convex proper function defined on $H$
and  $\beta_0$ be in the domain of $\phi$. Then, there exists a unique
solution $\beta \in L^{2}\left( 0,T;H\right) ,\beta ^{\prime }\in
L^{2}\left( 0,T;H\right) ,$ to
\begin{equation*}
\beta ^{\prime }\left( t\right) +\partial \phi \left( \beta \left( t\right)
\right) \ni f\left( t\right)\quad  \text{  a.e. } t,\qquad  \beta \left( 0\right)
=\beta _{0}.
\end{equation*}
\end{theorem}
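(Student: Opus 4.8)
The plan is to prove this classical result of Br\'ezis via the Yosida regularization of the maximal monotone operator $A := \partial\phi$. First I would recall that, because $\phi$ is proper, convex, and lower semicontinuous, its subdifferential $A = \partial\phi$ is a maximal monotone operator on $H$. For each $\lambda > 0$ this lets me introduce the resolvent $J_\lambda := (I + \lambda A)^{-1}$, which is single-valued and nonexpansive, and the Yosida approximation $A_\lambda := \lambda^{-1}(I - J_\lambda)$, which is monotone and globally Lipschitz with constant $1/\lambda$. Because $A_\lambda$ is Lipschitz, the regularized Cauchy problem
\begin{equation*}
\beta_\lambda'(t) + A_\lambda\big(\beta_\lambda(t)\big) = f(t), \qquad \beta_\lambda(0) = \beta_0,
\end{equation*}
has a unique global solution $\beta_\lambda$ with $\beta_\lambda, \beta_\lambda' \in L^2(0,T;H)$ by the standard Cauchy--Lipschitz theorem in Hilbert space; the whole proof then consists in passing to the limit $\lambda \to 0^+$.

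The second step is to derive a priori bounds that are uniform in $\lambda$. Taking the inner product of the regularized equation with $\beta_\lambda'$ and using the identity $\frac{d}{dt}\phi_\lambda(\beta_\lambda) = \langle A_\lambda(\beta_\lambda), \beta_\lambda'\rangle$ for the Moreau--Yosida regularization $\phi_\lambda$, together with Young's inequality on the $f$-term, yields a bound of the form
\begin{equation*}
\tfrac{1}{2}\int_0^T \|\beta_\lambda'\|^2\,dt + \phi_\lambda\big(\beta_\lambda(T)\big) \le \phi_\lambda(\beta_0) + \tfrac{1}{2}\int_0^T \|f\|^2\,dt.
\end{equation*}
Here the hypothesis $\beta_0 \in \operatorname{dom}\phi$ is exactly what guarantees $\phi_\lambda(\beta_0) \le \phi(\beta_0) < \infty$, so the right-hand side is bounded independently of $\lambda$. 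Since $\beta_\lambda' = f - A_\lambda(\beta_\lambda)$, this simultaneously bounds $\{\beta_\lambda'\}$ and $\{A_\lambda(\beta_\lambda)\}$ in $L^2(0,T;H)$, and hence $\{\beta_\lambda\}$ in $C([0,T];H)$.

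The third, and hardest, step is the passage to the limit. The key is a Cauchy estimate: subtracting the equations for parameters $\lambda$ and $\mu$, pairing with $\beta_\lambda - \beta_\mu$, and using the inclusion $A_\lambda(\beta_\lambda) \in \partial\phi(J_\lambda\beta_\lambda)$ together with the decomposition $\beta_\lambda - \beta_\mu = (J_\lambda\beta_\lambda - J_\mu\beta_\mu) + \lambda A_\lambda(\beta_\lambda) - \mu A_\mu(\beta_\mu)$, the monotonicity of $\partial\phi$ absorbs the middle term and leaves
\begin{equation*}
\tfrac{1}{2}\,\frac{d}{dt}\|\beta_\lambda - \beta_\mu\|^2 \le \big\langle A_\lambda(\beta_\lambda) - A_\mu(\beta_\mu),\, \mu A_\mu(\beta_\mu) - \lambda A_\lambda(\beta_\lambda)\big\rangle.
\end{equation*}
The uniform $L^2$ bound on $A_\lambda(\beta_\lambda)$ then forces the right-hand side to be $O(\lambda + \mu)$ after integration, so $\{\beta_\lambda\}$ is Cauchy in $C([0,T];H)$ and converges strongly to some $\beta$. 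Extracting weak $L^2$ limits $\beta_\lambda' \rightharpoonup \beta'$ and $A_\lambda(\beta_\lambda) \rightharpoonup \eta$, and observing that $\|J_\lambda\beta_\lambda - \beta_\lambda\| = \lambda\|A_\lambda(\beta_\lambda)\| \to 0$ so that $J_\lambda\beta_\lambda \to \beta$ as well, the strong--weak closedness of the maximal monotone graph of $\partial\phi$ (lifted to $L^2(0,T;H)$) identifies $\eta(t) \in \partial\phi(\beta(t))$ a.e. This is the delicate point, since it requires upgrading to strong convergence of $J_\lambda\beta_\lambda$ and invoking maximal monotonicity rather than mere monotonicity. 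Finally, uniqueness is immediate: for two solutions $\beta_1, \beta_2$, pairing the difference of the inclusions with $\beta_1 - \beta_2$ and using monotonicity of $\partial\phi$ gives $\frac{d}{dt}\|\beta_1 - \beta_2\|^2 \le 0$, so equal initial data forces $\beta_1 \equiv \beta_2$.
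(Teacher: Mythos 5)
The paper does not actually prove this statement: it is quoted verbatim as ``the well-known theorem of Br\'ezis'' with a citation to his monograph, and the authors only use it as a black box (together with Corollary 1) to treat the inclusion for $\beta$. Your proposal reconstructs what is essentially the original Br\'ezis proof --- Yosida regularization $A_\lambda=\lambda^{-1}(I-J_\lambda)$ of the maximal monotone operator $\partial\phi$, the energy estimate obtained by pairing with $\beta_\lambda'$ and using $A_\lambda=\nabla\phi_\lambda$, the Cauchy estimate in $\lambda,\mu$ via the splitting $\beta_\lambda-\beta_\mu=(J_\lambda\beta_\lambda-J_\mu\beta_\mu)+\lambda A_\lambda(\beta_\lambda)-\mu A_\mu(\beta_\mu)$, and identification of the limit by demiclosedness of the maximal monotone graph --- and the outline is correct, including the correct role of the hypothesis $\beta_0\in\operatorname{dom}\phi$ in making $\phi_\lambda(\beta_0)\le\phi(\beta_0)<\infty$. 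Two small technical points you should make explicit if you write this out in full: first, the displayed energy inequality only closes once you know that $\phi_\lambda(\beta_\lambda(T))$ is bounded \emph{below} uniformly in $\lambda$, which requires a preliminary uniform bound on $\|\beta_\lambda\|_{C([0,T];H)}$ (obtained, e.g., by pairing the regularized equation with $\beta_\lambda-v_0$ for a fixed $v_0\in D(\partial\phi)$ and Gr\"onwall) combined with the fact that a proper convex l.s.c.\ $\phi$, hence each $\phi_\lambda$, is minorized by a common affine function; second, the chain-rule identity $\frac{d}{dt}\phi_\lambda(\beta_\lambda)=\langle A_\lambda(\beta_\lambda),\beta_\lambda'\rangle$ needs the (standard but not free) justification that $\phi_\lambda$ is $C^1$ with Lipschitz gradient and $\beta_\lambda$ is absolutely continuous. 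With those provisos your argument is complete; in the intended application $H=\mathbb{R}$ and $\phi=I_{[\beta_*,\beta^*]}$, so everything simplifies considerably.
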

We also have the following result:
\begin{corollary}
\label{corollary1} In addition to the assumptions of Theorem~\ref{thm3}, suppose $f\left( t\right) $ is
replaced with $f\left( t\right) \beta $, where $f\in L^{\infty }\left(
0,T;H\right) $. Then, there exists a unique solution to the resulting
inclusion.
\end{corollary}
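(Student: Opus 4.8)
The plan is to reduce the corollary to Theorem~\ref{thm3} by a fixed-point argument, since the forcing now depends on the unknown. The inclusion to be solved is
\[
\beta'(t)+\partial\phi(\beta(t))\ni f(t)\beta(t)\quad\text{a.e. }t,\qquad \beta(0)=\beta_0,
\]
with $f\in L^\infty(0,T;H)$, and because the right-hand side $f(t)\beta(t)$ involves $\beta$, Theorem~\ref{thm3} cannot be invoked directly. First I would introduce a solution map $\Lambda$ as follows: for a given $w\in L^2(0,T;H)$, let $\Lambda w=\beta$ be the unique solution, furnished by Theorem~\ref{thm3}, of
\[
\beta'(t)+\partial\phi(\beta(t))\ni f(t)w(t)\quad\text{a.e. }t,\qquad \beta(0)=\beta_0.
\]
This is well defined because $f\in L^\infty$ and $w\in L^2$ give $fw\in L^2(0,T;H)$, so the hypotheses of Theorem~\ref{thm3} are met. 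Moreover, since $\phi=I_{[\beta_*,\beta^*]}$ has domain $[\beta_*,\beta^*]$, every image $\beta=\Lambda w$ automatically satisfies $\beta(t)\in[\beta_*,\beta^*]$ for all $t$, so there is no growth or blow-up to control and the sole task is to produce a fixed point of $\Lambda$. (The same construction absorbs an additional fixed $L^2$ forcing such as $\delta_*\Gamma$, which is exactly what the model's inclusion~\eqref{eqn38} requires, since such a term cancels in the difference of two solutions.)

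Next I would establish a contraction estimate. Let $w_1,w_2\in L^2(0,T;H)$ with images $\beta_i=\Lambda w_i$, $i=1,2$. Subtracting the two inclusions, selecting $\zeta_i\in\partial\phi(\beta_i)$ so that each becomes an equation, and taking the inner product with $\beta_1-\beta_2$, the monotonicity of $\partial\phi$ gives
\[
\bigl(\zeta_1-\zeta_2,\,\beta_1-\beta_2\bigr)\ge 0,
\]
so the set-valued terms drop out and one is left with
\[
\tfrac12\frac{d}{dt}\,|\beta_1-\beta_2|^2\le \|f\|_{L^\infty}\,|w_1-w_2|\,|\beta_1-\beta_2|.
\]
Integrating from $0$ to $t$, using $\beta_1(0)-\beta_2(0)=0$ and the Cauchy--Schwarz inequality, yields a bound of the form
\[
|\beta_1(t)-\beta_2(t)|^2\le C\int_0^t|w_1(s)-w_2(s)|^2\,ds,
\]
with $C$ depending only on $\|f\|_{L^\infty}$ and $T$; this is precisely the type of estimate already obtained in \eqref{eqn310}.

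Finally I would convert this Lipschitz estimate into a genuine contraction. The cleanest route is to equip $L^2(0,T;H)$ with the exponentially weighted (Bielecki) norm $\|w\|_\lambda^2=\int_0^T e^{-\lambda t}|w(t)|^2\,dt$ and choose $\lambda$ large enough that $\Lambda$ becomes a strict contraction on all of $[0,T]$; alternatively, one obtains a contraction on a short subinterval and extends, which here is immediate because the solution stays confined to $[\beta_*,\beta^*]$ and cannot escape in finite time. The Banach fixed-point theorem then produces a unique $\beta=\Lambda\beta$, which is exactly the unique solution of the corollary's inclusion; global uniqueness also follows at once from the same monotonicity computation applied to any two solutions. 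The main obstacle is the loss, relative to Theorem~\ref{thm3}, of a prescribed $L^2$ forcing; the decisive point that makes the fixed-point scheme close is the monotonicity of $\partial\phi$, which removes the subdifferential contributions from the energy estimate and leaves a clean Gr\"onwall-type inequality.
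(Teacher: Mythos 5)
Your proposal is correct and follows essentially the same route as the paper: define the solution operator via Theorem~\ref{thm3}, use the monotonicity of $\partial\phi$ to drop the subdifferential terms and obtain the Volterra-type estimate $|\beta_1(t)-\beta_2(t)|^2\le C\int_0^t|w_1-w_2|^2\,ds$, and conclude by a fixed-point argument. The only cosmetic difference is that you favor a Bielecki weighted norm where the paper shows a sufficiently high power of the map is a contraction; these are interchangeable standard devices.
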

\begin{proof}
Let $\hat{\beta}\in C\left( \left[ 0,T\right] ;H\right)$ and let
$F(\hat{\beta}) $ be the solution of 
\begin{equation}
\label{eqn32fbeta}
\beta ^{\prime }\left(
t\right) +\partial \phi \left( \beta \left( t\right) \right) \ni f\left(
t\right) \hat{\beta}.
\end{equation}
Then,  standard manipulations and the monotonicity of the subgradient,
for $F(\hat{\beta}) $ and $F(\bar{\beta}) $,   yield
\begin{equation*}
\frac{1}{2} \vert ( F ( \hat{\beta}) -F( \bar{\beta})) ( t)  
\vert _{H}^{2}\leq \int_{0}^{t}\left(
f( s)( \hat{\beta}( s) -\bar{\beta}(s)) ,( F( \hat{\beta}) 
-F( \bar{\beta})) ( s) \right) ds.
\end{equation*}
This implies,
\begin{equation*}
\vert ( F( \hat{\beta}) -F( \bar{\beta})) ( t) \vert _{H}^{2}\leq C\int_{0}^{t}\vert
\hat{\beta}( s) -\bar{\beta}( s) \vert ^{2}ds
\end{equation*}
and shows that a sufficiently high power of $F$ is a contraction map.
Hence, $F$  has a unique fixed point that is the unique solution of
the evolution inclusion (\ref{eqn32fbeta}). 
\end{proof}

Now, we turn to the whole problem defined by (\ref{eqn21})--(\ref{eqn211}) 
and construct a mapping $\Theta :C([0,T]) 
\rightarrow C([ 0,T]) $ as follows. Let $\bar{\beta} \in C\left( \left[ 0,T
\right] \right) $ having values in $\left[ \beta _{\ast },\beta ^{\ast }
\right] $. Then, it follows from Proposition~\ref{lem:existence1} that
the solution to the initial value problem for such 
fixed $\bar{\beta}$ is unique. Now, we define a map 
$\Theta :C([ 0,T]) \rightarrow C([ 0,T]) $
where $\Theta ( \bar{\beta} ) =\beta ,$ is the solution of
(\ref{eqn38}) for the given $\bar{\beta}$. 
We write the equation for $\beta$, given $\bar{\beta}$, as
\begin{equation*}
\frac{d\beta }{dt}+\delta ^{\ast }\frac{R( \bar{\beta}) }{N(
\bar{\beta}) }\beta +\partial \phi ( \beta) \ni \delta
_{\ast }\beta F( \bar{\beta}) ,\qquad \beta( 0) =\beta
_{0}\in [ \beta _{\ast },\beta ^{\ast }] .
\end{equation*}
Since the estimates above do not depend on the choice of $\bar{\beta}$,
as long as it has values in $\left[ \beta _{\ast },\beta ^{\ast }\right] ,$ the 
differential inclusion is of the form
\begin{equation*}
\frac{d\beta }{dt}+\partial \phi ( \beta ) \ni G( t, \bar{\bx},
\bar{\beta}) ,\qquad  \beta \left( 0\right) =\beta _{0}\in \left[
\beta _{\ast },\beta ^{\ast }\right],
\end{equation*}
where $ \bar{\bx}$ is the solution to the initial value problem with given $\bar{\beta}$, and
$G$ is a Lipschitz continuous function in both $\bar{\bx}$ 
and $\bar{\beta}$. 

To proceed, we let  $\bar{\beta}_1$ and $\bar{\beta}_2$ be given
with the properties as above, and let $\Theta ( \bar{\beta}_1) \equiv
\beta _{1}$ and $\Theta (\bar{\beta}_2) \equiv \beta _{2}$. Then,
it follows from the inclusion and the monotonicity of $\partial \phi$ and routine
computations that there exists a constant $C>0$, 
independent of $\beta$ and  a suitable Lipschitz constant $K$, such that
\begin{equation*}
\frac{1}{2}\vert \beta _{1}( t) -\beta _{2}( t)
\vert ^{2}\leq 2CK^{2}\int_{0}^{t}(\vert \bar{\beta}_1-\bar{
\beta}_2 \vert ^{2}+\vert \bar{\bx}_1-\bar{\bx}_2\vert^{2}) ds.
\end{equation*}
It follows from (\ref{eqn310}), after modifying the constants,  that
\begin{multline*}
\vert \Theta ( \bar{\beta}_1)( t) -\Theta (
\bar{\beta}_2)( t)\vert ^{2}\equiv \vert \beta
_{1}( t) -\beta _{2}( t)\vert ^{2}
\\
\leq C\left(\int_{0}^{t}(\vert \bar{\beta}_1-\bar{\beta}_2 \vert
^{2}\,ds +\int_0^t\int_{0}^{s} \vert \bar{\beta}_1(\tau)-\bar{\beta}_2
(\tau)\vert ^{2}\,d\tau ds \right).
\end{multline*}
Therefore,  a sufficiently high power of $\Theta $ is a contraction mapping, and it has 
a unique fixed point that is the solution to the problem. This completes the
proof of the existence and uniqueness of the solution of the full problem
assuming the given function $\theta $ is continuous. 

To take into account a piecewise continuous $\theta$, we assume that 
there are finitely many non-overlapping time intervals $\left\{ I_{i}\right\}
$ and continuous functions $\theta _{i}\left( t\right) $ defined on $
\overline{I_{i}}$ such that $\theta \left( t\right) =\theta _{i}\left(
t\right) $ on the interior of $I_{i}$. Then, on each interval where $\theta$ is continuous,
we have a unique solution obtained as above, and it is straightforward to piece
these into a global solution of the problem.

This leads to the main mathematical result in this work. 
\begin{theorem} Assume that $\theta$ is bounded and piecewise continuous and
\begin{equation}
\label{thm}
|\gamma_{E}|< \min\{\widehat{\gamma}_{fc}, \widehat{\gamma}_{pc}\},\qquad
|\gamma_{I}| < \widehat{\delta},
\end{equation}
where $\widehat{\gamma}_{fc}$, $\widehat{\gamma}_{pc}$,  $\widehat{\gamma}_{fc}$ are 
defined as in \eqref{eq:hat1}--\eqref{eq:hat3}.
 Then, there exists a unique solution to problem \eqref{eqn21}--\eqref{eqn211} on 
 every finite time interval $[0, T]$.
\end{theorem}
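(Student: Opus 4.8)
The plan is to prove the theorem by a fixed-point argument in the contact variable $\beta$, handling the piecewise-continuous $\theta$ only at the very end by concatenation. First I would fix a continuous $\theta$ and work with the self-map $\Theta$ constructed above. Given any $\bar{\beta}\in C([0,T])$ with values in $[\beta_*,\beta^*]$, Proposition~\ref{lem:existence1} produces a unique $\bar{\bx}=(S,\dots,R)$ solving \eqref{eqn21}--\eqref{eqn27}; Propositions~\ref{prop1} and \ref{prop2} guarantee that this $\bar{\bx}$ is positive and bounded on $[0,T]$, so that $N>0$ and the coefficients $R/N$ and $\Gamma$ entering the inclusion \eqref{eqn38} are bounded, continuous, and well defined. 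Feeding these into \eqref{eqn38} and invoking Theorem~\ref{thm3} together with Corollary~\ref{corollary1} yields a unique $\beta=\Theta(\bar{\beta})$; the subdifferential \eqref{subdiff} of $\phi=I_{[\beta_*,\beta^*]}$ forces $\beta(t)\in[\beta_*,\beta^*]$, so $\Theta$ maps the admissible set into itself.

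Second, I would show that a sufficiently high iterate of $\Theta$ is a contraction on $C([0,T])$. For two inputs $\bar{\beta}_1,\bar{\beta}_2$ with images $\beta_1,\beta_2$, the monotonicity of $\partial\phi$ and the Lipschitz dependence of $G$ on $(\bar{\bx},\bar{\beta})$ give a bound of the form
\[
|\beta_1(t)-\beta_2(t)|^2 \le C\int_0^t\big(|\bar{\beta}_1-\bar{\beta}_2|^2+|\bar{\bx}_1-\bar{\bx}_2|^2\big)\,ds,
\]
into which one substitutes the ODE stability estimate \eqref{eqn310} for the term $|\bar{\bx}_1-\bar{\bx}_2|^2$. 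This reproduces the nested double integral displayed just before the statement, and iterating $\Theta$ generates the usual factorial factor $(Ct)^m/m!$, so $\Theta^m$ is a contraction for $m$ large. By the Banach fixed-point theorem, $\Theta$ has a unique fixed point $\beta$; the pair $(\bar{\bx},\beta)$, with $\bar{\bx}$ the solution associated to this $\beta$, then solves the full problem \eqref{eqn21}--\eqref{eqn211}, and uniqueness follows from the uniqueness of both the fixed point and the ODE solution it determines.

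Third, to remove the continuity restriction on $\theta$, I would use the finite partition $\{I_i\}$ on whose interiors $\theta$ agrees with continuous functions $\theta_i$. On the first subinterval the continuous-$\theta$ result yields a unique solution; taking its right-endpoint values (which remain positive with $N>0$ and $\beta\in[\beta_*,\beta^*]$ by the preceding propositions) as initial data for the next subinterval, I repeat the argument and concatenate. Continuity of the resulting trajectory at the finitely many breakpoints, together with the invariance of the admissible set for $\beta$, yields a single solution on all of $[0,T]$, with global uniqueness inherited from uniqueness on each piece.

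I expect the contraction step to be the main obstacle: the delicate point is combining the inclusion estimate with \eqref{eqn310} so that the iterated-kernel structure produces genuine factorial decay of the contraction constants, since a naive single-step bound only yields a Lipschitz map that need not be contractive. Everything else is bookkeeping built on the propositions and on Br\'ezis' theorem.
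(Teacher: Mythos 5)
Your proposal is correct and follows essentially the same route as the paper: positivity and boundedness from Propositions~\ref{prop1} and \ref{prop2}, existence for fixed continuous $\beta$ with the stability estimate \eqref{eqn310}, Br\'ezis' theorem and Corollary~\ref{corollary1} for the inclusion, a high power of $\Theta$ being a contraction via the nested integral kernel, and finally concatenation over the subintervals where $\theta$ is continuous. The paper's argument is exactly this, so no further comment is needed.
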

We note that in the model the assumptions on $\gamma_{E}$ and $\gamma_{I}$ make sense,
although it is possible to remove them at the expense of considerable additional effort 
in obtaining the estimates.

Finally, since $\beta$ has values in $\left[ \beta _{\ast },\beta
^{\ast }\right] $ and its derivative is in $L^{2}\left( 0,T\right) $, it follows that  
$\beta $ is H\"older continuous with exponent $1/2$. Therefore, the solution is
at least  in $C^{1,1/2}(0,T)$.  It seems plausible that the regularity is
higher since $\beta'\in L^{\infty}\left( 0,T\right)$ and then
it is H\"older continuous with exponent $1$, however, we leave  the 
question open as part of further study of the solution's regularity.

\section{Randomness in system parameters}
\label{sec-random}
For the sake of completeness, this short section provides a rather abstract discussion 
about adding randomness to the system parameters. This allows a better understanding of 
the model's dependence on the parameter values.  Moreover, to use the model as a 
predictive tool, it is crucial to find out how parameter changes affect model predictions. 
Small changes in the solution that are caused by small changes in a parameter indicate that 
there is low sensitivity to the parameter and an approximate value is sufficient for 
acceptable predictions,  while considerable changes in the solution caused by small changes 
in the parameter values indicate that a more precise parameter value is needed to 
obtain reliable predictions. It may also indicate that the model is unstable or the 
process itself is unstable, in which case attempts at prediction may be of little
use.

We now introduce randomness into the system parameters. For the sake of generality, 
we note that we have 38 system parameters including the 7 initial conditions, as listed in Table~\ref{tab:1a} and let  
the {\it probability space} be  
$(\Omega, \mathcal{F}, P)$, where $\Omega$ is the {\it sample space}, a box
centered at the origin of $\mathbb{R}^{38}$;   
$\mathcal{F}$ is the Borel $\sigma$-algebra, and $P$ is a general probability function. 
We let 
$\overline{\omega} \in \Omega$ be the random variable  and define 
\[
\omega = \widehat{\omega}+\overline{\omega},\qquad \left(\omega \in \widehat{\omega}+\Omega\right),
\]
where $\widehat{\omega}$ denotes the vector containing the optimized parameters.
We note that the choice of $\Omega$ is such that $\omega\geq 0$ (component-wise).

Next, we consider the modified system \eqref{eqn21}--\eqref{eqn29} and \eqref{eqn211}, in the form
\begin{equation*}
\label{eqn41}
\bx ^{\prime }=\bbF\left( t,\bx, \beta, \omega \right),\quad  \bx _{0}\left( 0\right) =\widehat{\bx}(\omega),
\end{equation*}
 where $\widehat{\bx}(\omega)=\bx _{0}+\bar{\omega},\; \bar{\omega}\in \Omega$,
together with the differential inclusion
\begin{equation*}
\frac{d\beta }{dt}+\widetilde{\delta} ^{\ast }\frac{R(\omega)}{N(\omega)}\beta +\partial \phi \left(
\beta \right) \ni \delta _{\ast }\Gamma(\omega) ,\qquad 
 \beta \left( 0\right) =\beta
_{0}\in \left[ \beta _{\ast },\beta ^{\ast }\right] .   \label{eqn42}
\end{equation*}
Here, $\partial\phi(\beta)=I_{[\beta_*,\beta^*]}(\beta)$ is defined by (\ref{subdiff}).
The function $\bbF$ depends on $\omega $ such that  $\omega \rightarrow
\bbF\left( t,\bx,\beta ,\omega \right) $ is measurable, $\bbF$ is Lipschitz continuous in $\bx$ and $\beta $ and is
continuous in all of the first three variables. 

It follows from the recent results in \citep{KLS16} that the uniqueness of
the solutions for each fixed $\omega$ implies that the functions 
$\omega \rightarrow \bx\left(
\cdot ,\omega \right) ,\omega \rightarrow \beta \left( \cdot ,\omega \right)
$ and $\omega \rightarrow \bbF\left( \cdot ,\bx\left( \cdot
,\omega \right) ,\beta \left( \cdot ,\omega \right) ,\omega \right) $ are
each measurable into $C\left( \left[ 0,T\right] \right) $. Moreover,
 $\omega \rightarrow \beta ^{\prime
}\left( \cdot ,\omega \right) $ is measurable into $L^{2}\left( 0,T\right) $, and then
it follows from  Theorem 3.3 in \citep{KLS16} that $\left( t,\omega \right)
\rightarrow \beta ^{\prime }\left( t,\omega \right) $ can be considered as a
product measurable function.

\section{Stability of the DFE and the EE}
\label{sec-stability}
This section discusses the stability of the critical points of the system:
the disease-free equilibrium (DFE) and the endemic 
 equilibrium (EE) (when it exists), based on the system's Jacobian matrix.
The usual approach, see e.g., \citep{hethcote00, allen07, thieme03}, is to 
derive the {\it basic stability number} $\mathcal{R}_C$. In simple SEIR models, 
$\mathcal{R}_C$ coincides with the basic reproduction number $R_0$, however,  in more complex
models the expression is naturally as complex as the model. When 
$\mathcal{R}_C<1$,  all the eigenvalues of the Jacobian, evaluated at the
DFE, have negative real parts, which indicates that the DFE is asymptotically stable
(i.e., stable and attracting), and there is no EE.
On the other hand, $\mathcal{R}_C=1$ is a bifurcation point, 
and when $\mathcal{R}_C>1$  at least one of the real parts is positive and the DFE
loses its stability. Usually, when the DFE becomes unstable the EE  
appears and is stable and attracting.  However, because of the complexity of our model, 
we did not find a closed-form expression for $\mathcal{R}_C$, so, instead, 
we derived the Jacobian of the system (Appendix \ref{appendix}) 
assuming that $\beta$ is constant and evaluated 
it numerically at the DFE.

To proceed, we assume that $\beta=\beta_0\in [\beta_*, \beta^*]$ and the population is 
constant, that is, the number of `natural' COVID-19 unrelated deaths is balanced by $p_S$, 
and so $p_{S}= \mu N$ and $p_{Efc}, \hdots, p_R$ vanish and the disease-related deaths 
that happened on the way to the DFE are not included, that is, $d_I=d_H=0$, since the DFE 
excludes any infection.  We let the solution 
\begin{equation*}
    \label{eqn51}
\bx(t)=(S(t),E_{fc}(t), E_{pc}(t), I_{fc}(t), I_{pc}(t), H(t), R(t)),
\end{equation*}
represent the trajectory of the system in $\mathbb{R}^{7}_{+}$, for $0\leq t \leq T$.
Then, the DFE contains only susceptibles, that is,
\[
DFE=(N, 0,0,0,0,0,0),
\]
and the force of infection vanishes, $\Gamma =0$.

\vspace{0.1cm}
{\small 
\begin{minipage}{\linewidth}
\captionof{table}{Parameters of the model and their description.} \label{tab:1a}
\begin{tabular}{l*{6}{l}r}
Parameter              & Description\\
\hline
\vspace{-0.15cm}
\\
$N(t)$ & total population (at time $t$), (\ref{eqn28})\\
$\mu$     & `natural' death rate coefficient, fixed \\
$\theta(t) ; (1-\theta(t))$ & fractions of fully and partially compliant exposed \\
$p_{S}$ & recruitment rate of susceptibles \\
$p_{E_{fc}}; p_{E_{pc}}$ & recruitment rates of fully and partially compliant exposed\\
$p_{I_{fc}}; p_{I_{pc}}$ & recruitment rates of fully and partially compliant infectives \\
$p_{H}$ & recruitment rate of hospitalized \\
$p_{R}$ & recruitment rate of recovered \\
$\epsilon_{E_{fc}}; \epsilon_{E_{pc}}$  & factors in the transmission rates by exposed   \\
$\epsilon_{I_{fc}};  \epsilon_{I_{pc}}$  &  factors in the transmission rates by infectives   \\
$\epsilon_H $    & a factor in transmission rate by hospitalized   \\
$\gamma_{fc}; \gamma_{pc}$     & rates of developing clinical symptoms in exposed   \\
$\gamma^{-}$ &  crossing rate from $E_{fc}$ to $I_{pc}$ \\
$\gamma^{+}$  &  crossing rate from $E_{pc}$ to $I_{fc}$\\
$\gamma_{E},\; \gamma_{I}$ & rates between fully and partially compliant exposed and infectives   \\
$\delta$  & rate of hospitalization of infectives \\
$d_{I}$    & disease-induced death rates of  infectives \\
$d_H$     & disease-induced death of hospitalized \\
$\sigma_{E}$ & recovery rate of exposed\\
$\sigma_{I}$      & recovery rate of infectives\\
$\sigma_H$     &  recovery rate of hospitalized  \\
$\delta_*$ &  infectiveness rate increase factor with infected \\
$\delta^*$ &  infectiveness rate decrease factor with recovered \\
$\beta_{*}; \beta^{*}$ & lower and upper bounds on the contact rate\\
$\beta_{0}$ & initial contact rate \\
$E_{fc0}$; $E_{pc0}$  & initial values of exposed populations\\
$I_{fc0}$; $I_{pc0}$ & initial values of infected populations\\
$H_0$; $R_0$ & initial values of hospitalized and recovered
\vspace{0.1cm}\\
\hline\\
\end{tabular}
\end{minipage}
}
\vskip8pt
We note that the model contains 29 parameters and seven initial conditions. However, 
in the computations we use three theta values instead of one
so there are $31$ parameters and seven initial conditions, which  are optimized.
The Jacobian  of the system (see Appendix \ref{appendix}) evaluated at the DFE is 
given by
\begin{equation}
\label{eqn54}
 J(DFE)=
\end{equation}
\[
{\small
\left(
\begin{array}{ccccccc}
-\mu & -\beta \epsilon_{Efc} & -\beta \epsilon_{Epc} & -\beta \epsilon_{Ifc} & 
-\beta \epsilon_{Ipc} & -\beta \epsilon_{H} & 0 \\[3pt]
0 & \theta \beta\epsilon_{Efc}- \widehat{\gamma}_{fc}& \theta \beta\epsilon_{Epc} & 
\theta \beta\epsilon_{Ifc} & \theta \beta\epsilon_{Ipc} & \theta \beta\epsilon_{H} & 0 \\[3pt]
0 & (1-\theta)  \beta\epsilon_{Efc} & (1-\theta)  \beta\epsilon_{Epc} -\widehat{\gamma}_{pc}& 
(1-\theta)  \beta\epsilon_{Ifc} &  (1-\theta)  \beta\epsilon_{Ipc} & (1-\theta)  
\beta\epsilon_{H} & 0 \\[3pt]
0 & \gamma_{fc} & \gamma^{-} & -\widehat{\delta} & 0 & 0 & 0 \\[3pt]
0 & \gamma^{+} & \gamma_{pc} & 0 & -\widehat{\delta} & 0 & 0 \\[3pt]
0 & 0 & 0 & \delta & \delta & -\widehat{\sigma_H} & 0 \\[3pt]
0 & \sigma_{E} & \sigma_{E} & \sigma_{I} & \sigma_{I} & \sigma_{H} & -\mu
\end{array}
\right)
}
\]

where
$\widehat{\gamma}_{fc}$, $\widehat{\gamma}_{pc}$, and $\widehat{\delta}$ are defined in 
\eqref{eq:hat1}--\eqref{eq:hat3} and  
\label{eq:hat}
\begin{align}
\widehat{\sigma}_H&=\sigma_H +d_H+\mu.
\end{align}

We remark that a rather technical and involved analysis is needed to 
establish rigorously that when the DFE is stable and attracting, 
\[
\lim_{t\to \infty} \bx(t) =DFE,
\]
for all nonnegative initial conditions, and the EE does not exist.
And when the DFE loses its stability, the EE appears and is stable and attracting.
We leave this  theoretical question open here and remark about it in Section \ref{sec-con}.

\section{Numerical Simulations}
\label{sec-sims}
 This section describes the approach to our  computer simulations of the model. 
We use \texttt{MATLAB}'s \texttt{ode45} ODEs solver together with \texttt{fmincon} constrained 
optimization routine to determine the value of the optimized model parameters 
that minimize the $\ell^1-$deviation of the model predictions for the total number 
of cases and deaths and the seven-day averaged numbers
of daily cases and deaths from the given data. 
To be specific, to find some of the model parameters, we minimize the  objective function:
\begin{align*}
    \Psi(\omega):=&\sum_{n=1}^{N_{data}}|CC(\omega,t_n)-\widehat{CC}_n|+\sum_{n=1}^{N_{data}}|CD(\omega,t_n)-\widehat{CD}_n|\\
    &+\sum_{n=1}^{N_{data}}|DC(\omega,t_n)-\widehat{DC}_n|+\sum_{n=1}^{N_{data}}|DD(\omega,t_n)-\widehat{DD}_n|,
\end{align*}
where $N_{data}$ is the total number of data points, $\omega\in \mathbb{R}^{38}$ is the parameter set, 
$CC(\omega,t_n)$, $CD(\omega,t_n)$, $DC(\omega,t_n)$ and $DD(\omega,t_n)$ are the model predictions using 
$\omega$ for the cumulative cases, cumulative deaths, daily cases and daily deaths, respectively, on day 
$n$, and $\widehat{CC}_n, \widehat{CD}_n, \widehat{DC}_n$ and $\widehat{DD}_n$ denote the actual 
reported cumulative cases, cumulative deaths, 7-day averages of the daily cases, and daily deaths, 
respectively, on day $n$.

Table~\ref{tab:1a} summarizes the meaning of the model parameters in $\omega$. For the model
to make sense, we impose the following constraints:

\begin{equation}\label{eq:constraint}
\widehat{\gamma}_{fc} + |\gamma_E| <1, \quad \widehat{\gamma}_{pc}  +|\gamma_{E}|<1,\quad 
\widehat{\delta} +|\gamma_{I}|<1, \quad \widehat{\sigma}<1,
\end{equation}
where $\widehat{\gamma}_{fc}, \widehat{\gamma}_{pc}, \widehat{\delta}, \widehat{\sigma}$ are defined in \eqref{eq:hat1}--\eqref{eq:hat3}, and \eqref{eq:hat}. 
%
%
The first condition in \eqref{eq:constraint} guarantees that in equation \eqref{eqn22} the number of those that leave
compartment $E_{fc}$, per day,  does not exceed the number of the individuals in that
compartment. The other conditions in \eqref{eq:constraint} have similar interpretations.

\vskip8pt
\hspace{-0.2cm}
{\small
\begin{minipage}{\linewidth}
\centering
\captionof{table}{Optimized model parameter values for South Korea}
 \label{tab:2}
 \hskip-20pt
\begin{tabular}{l|*{6}{c}r}
parameter              &  optimized value	& units 	& interval \\
\hline
\hline
$N(0)$                                   & $51\, 10^{6}$	& indiv  & fixed  \\
$\mu$                                 & $1.7\,10^{-5}$  	& 1/day & fixed  \\
$\theta_1; \theta_2;\theta_3$         & $0.0101, 0.995, 0.352$ & -  & $[0.01,0.25];\;[0.1,0.995];\;[0.35,0.75]$ \\
$p_{S}$                               & $124$ 	& indiv/day  & $[0, 870]$  \\
$p_{E_{fc}}; p_{E_{pc}}$                  & $0; 6.2359$ 	& indiv/day & $[0,870]$; \; $[0,100]$   \\
$p_{I_{fc}}; p_{I_{pc}}$               & $0.1037; 0.0732$ 	& indivl/day  & $[0,10]$; \; $[0,10]$\\
$p_{H}$                               & $0.0$ 	& indiv/day & $[0,5]$ \\
$p_{R}$                               & $4.6508$ 	& indiv/day & $[0,5]$ \\
$\epsilon_{E_{fc}}; \epsilon_{E_{pc}}$    &  0.05; 0.9	& -  & $ [0.05 ,0.6];\; [0.25,0.90]$  \\
$\epsilon_{I_{fc}};  \epsilon_{I_{pc}}$  &  0.3; 0.4	& -   & $ [0.30,0.85];\; [0.4,0.75]$  \\
$\epsilon_H $                         &  0.005	& -  &  $[0.005, 0.2]$  \\
$\gamma_{fc}; \gamma_{pc}$             & 0.085;\; 0.04	& 1/day & $ [0.01,0.085];\; [0.04,0.15]$  \\
$\gamma^{-}; \gamma^{+}$              & 0.001216;\; 0.085	& 1/day & $ [0.0005,0.0035];\; [0.005,0.085]$  \\
$\gamma_{E},\; \gamma_{I}$            & 0.04486\; -0.04261	& 1/day & $[-0.1,0.1];\; [-0.1,0.1]$   \\
$\delta$             & $0.008$	& 1/day & $[0.001,0.008]$  \\
$d_{I}$                         & $0.0005203$	& 1/day & $[0.00025,0.005]$  \\
$d_H$                                 & $0.04909$	 & 1/day   & $[0.01,0.05]$  \\
$\sigma_{E}$           & $0.05067$	& 1/day   &$ [0.01,0.15]$  \\
$\sigma_{I}$           & $0.15$	& 1/day   &$[0.01,0.15]$  \\
$\sigma_H$     & $0.01819$	& 1/day       & $ [0.005,0.1]$  \\
$\delta_*; \delta^*$  & $0.7292;\; 0.2982$  &  1/(indiv)(day)$^2$ & $ [0.005,0.75];\; [0.05,0.95]$ \\
$\beta_{*}; \beta^{*}$ & $0.05; 0.5$  &  1/(indiv)(day)$^2$ & $[0.001,0.5];\;  [0.001,0.6]$ \\
$\beta_{0}$                           &  $0.1133$ & 1/day & $[0.001, 0.6]$\\
$E_{fc0}$; $E_{pc0}$                  & 103.7;\; 2584 & indiv & $[0,500];\;  [0,5000]$ \\
$I_{fc0}$; $I_{pc0}$                   & 1.079;\; 7.29 & indiv & $[0,100];\;  [0,100]$ \\
$H_0$; $R_0$               & 4.147;\; 10 & indiv & $[0,10];\;  [0,10]$ \\
\hline
\end{tabular}
\end{minipage}}

\vskip8pt

\section{COVID-19 in The Republic of Korea}
\label{sim-SK}
This section presents the model simulations of the pandemic's dynamics 
in South Korea, which is chosen because it is one of the first countries 
that went through the pandemic cycle and the complete unbiased data is available on 
the web, in particular in \citep{Worldometer}. As of Aug.  8, 2020, the pandemic there 
is under control, with minor outbreaks.
The values of the optimized baseline parameters and their intervals of feasibility can be found 
in Table~\ref{tab:2}. The total population $N(0)$ and the COVID-19 unrelated death rate $\mu$ 
are fixed constants obtained from South Korea's population data and are
 not part of the optimization process.
We solved the ODEs of Model~\ref{model} numerically using these optimized parameters  
predicting the pandemic's near future.

As noted above, the values in Table~\ref{tab:2} were obtained by an $\ell^1$ optimization 
routine in MATLAB that compares the model predictions for the total number of cases and the 
total number of deaths  as well as the seven-day averaged daily cases and deaths with the data taken from 
\citep{Worldometer} for the time period of 100 days,  from 15 Feb.  until 25 May 2020. 
It is important to emphasize that the model uses past data and, therefore, cannot predict outcomes 
resulting from \underline{major} changes in the virus behavior, new mutations, change in population behavior, 
new  policies and directives, or the environment.
However, the model predictions for the following 75 days are very close to what was observed
in the field, as we show below. Then, by modifying only $\theta$, the model predictions can fit 
very well for more than a year.

\subsection{Baseline simulations}
\label{SKbaseline}
We start with  the baseline simulation, using the 
parameters in Table 2. To properly describe the South Korean government's response to the
pandemic, we use the following values of $\theta$: the lock-down order 
was declared on day 8, and the relaxation of the directive started on day 81. 
To represent these policy changes in the model, we set
\[
\theta =
\begin{cases}
  \theta_1=0.01   & \text{ if }\ t<8, \\
  \theta_2=0.995   & \text{ if }\ 8\leq t< 81, \\
  \theta_3=0.352   & \text{ if }\ 81\leq t\leq   175.
\end{cases}
\]
These values were obtained by the optimization subroutine and in 
 particular, we note that $\theta=0.995$ for days $8-80$ corresponds 
 to very effective control directives and a highly compliant population.

The cumulative number of COVID-19 cases and the cumulative number of deaths are 
depicted in Figure~\ref{fig:variable-theta-CCCD} for the period of 175 days, 
from 15 Feb.  until 8 Aug. 2020.  The red 
filled circles are the field data for the first 100 days of the pandemic 
(15 Feb. until 25 May), \citep{Worldometer}, which were used to optimize the parameters, 
while the green filled squares depict subsequent data (26 May till 8 Aug.).  The agreement between the
green filled squares and the blue line, which is a measure of the model's predictive ability, 
 is good for the cumulative number of cases, although it slightly 
under-predicts the more recent cases, however, the prediction is remarkable for the total 
deaths.  We note the model does not fully capture the initial  exponential growth in the 
number of cases, while  it does so well with the number of deaths. The
median discrepancy in the cumulative cases is $182$ and the median error in the 
cumulative deaths is $4.4$.

\begin{figure}[h!] 
  \hspace{-0.8cm}
  \includegraphics[width=1.10\textwidth]{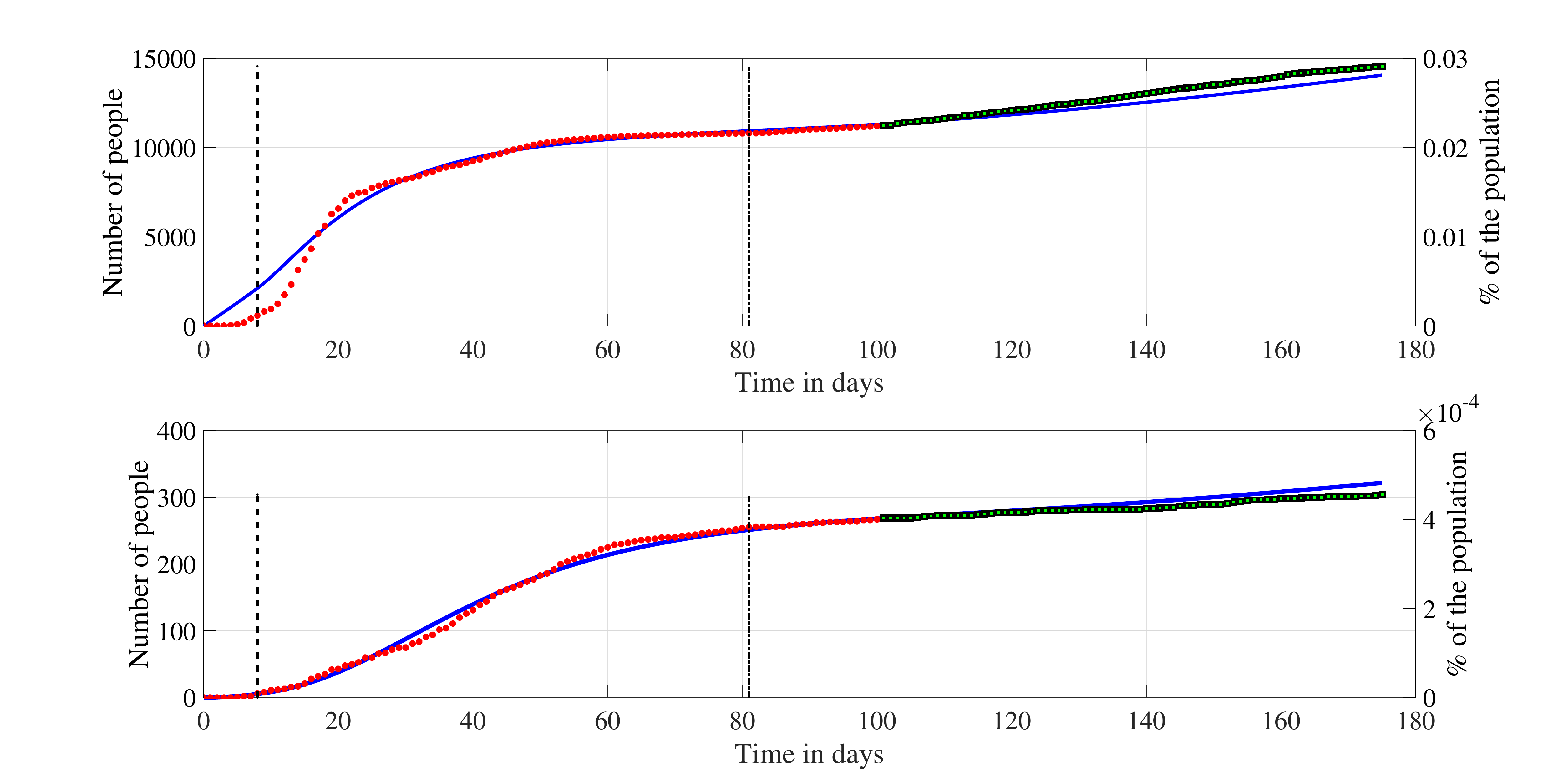}
   \caption{\small Model predictions (blue curves) of the cumulative cases (top) and deaths (bottom); 
   175 days from 15 Feb. until 8 Aug. 2020. The red filled circles are the $100$ days of field 
   data used in the optimization, the green filled squares depict subsequent data. 
   $\theta$ changed from $\theta_1=0.010$ to $\theta_2=0.995$ on day 8 (vertical dashed lines) and to 
   $\theta_3=0.352$ on day 81 (vertical dash-dotted lines). The median 
   error in the cumulative cases was $182$ and the  maximum error was $1847$ (on day 11); the median error 
   in the cumulative deaths was $4.4$ and the maximum error was $18$ (on day 174).}
   \label{fig:variable-theta-CCCD}
\end{figure}

\begin{figure}[h!] 
  \hspace{-1.0cm}
  \includegraphics[width=1.15\textwidth]{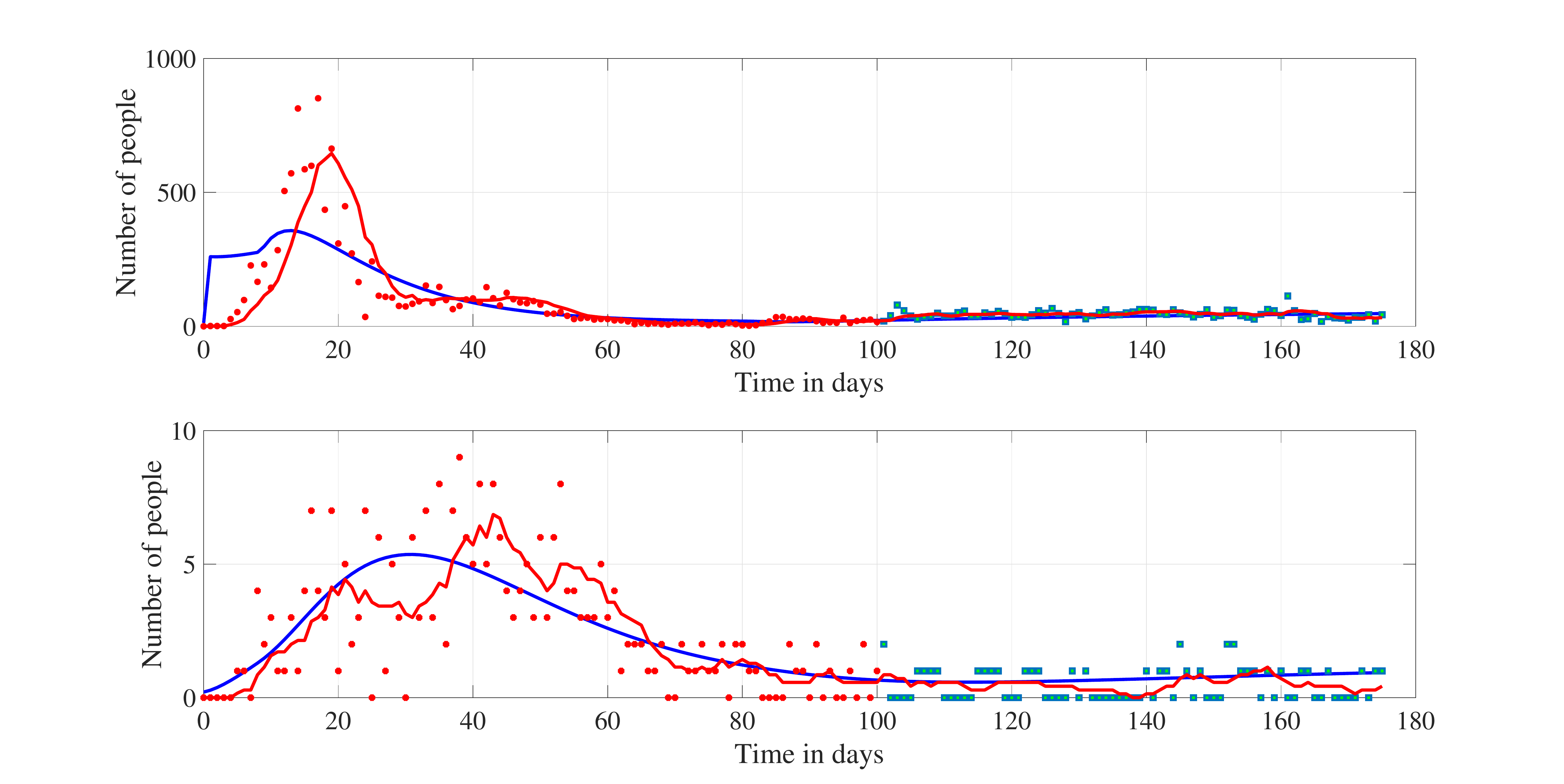} 
   \caption{\small Model predictions (blue curves) of the daily cases of infection (top) and deaths (bottom), 
   and seven-day moving averages of the data (red curves); $175$ days from 15 Feb. until 8 Aug. 2020. The 
   red filled circles are the $100$ days of field data 
   used in the optimization, the green filled squares depict subsequent data. The red curves are the seven-day 
   moving averages of the data.
   $\theta$ changed from $\theta_1=0.01$ to $\theta_2=0.995$ on day $8$ and to $\theta_3=0.352$ on day $81$. }
   \label{fig:variable-theta-DCDD}
\end{figure}
Fig.~\ref{fig:variable-theta-DCDD} depicts the daily cases 
and deaths, respectively, for the same simulations. 
To allow for a better comparison, in addition to the data (red filled circles and green filled squares) we also introduce,
similarly to various web publications, a seven-day moving average (red curve), i.e., averaging the 
data over the previous seven days (or a part of them at the beginning). 
Even though there is considerable 
randomness in the daily data while the simulations don't admit randomness, and the numbers of cases 
are rather small,  the daily case predictions follow  the trends surprisingly well.
We conclude that
the model captures well the known details of the pandemic in South Korea. 

Furthermore, the form of the model allows us to investigate the behavior of the other subpopulations, which are usually 
hard to assess and are not reported separately. This provides further details and considerable  insight into 
the disease dynamics.  Fig.~\ref{fig:variable-theta-IE} 
depicts the model predictions of (for the sake of simplicity) the combined daily numbers of exposed,
$E_{fc}+E_{pc}$, and of infected  $I_{fc}+I_{pc}$, noting that the
corresponding data is not available (in open sources). 

\begin{figure}[ht] 
\hspace{-0.9cm}
   \includegraphics[width=1.15\textwidth]{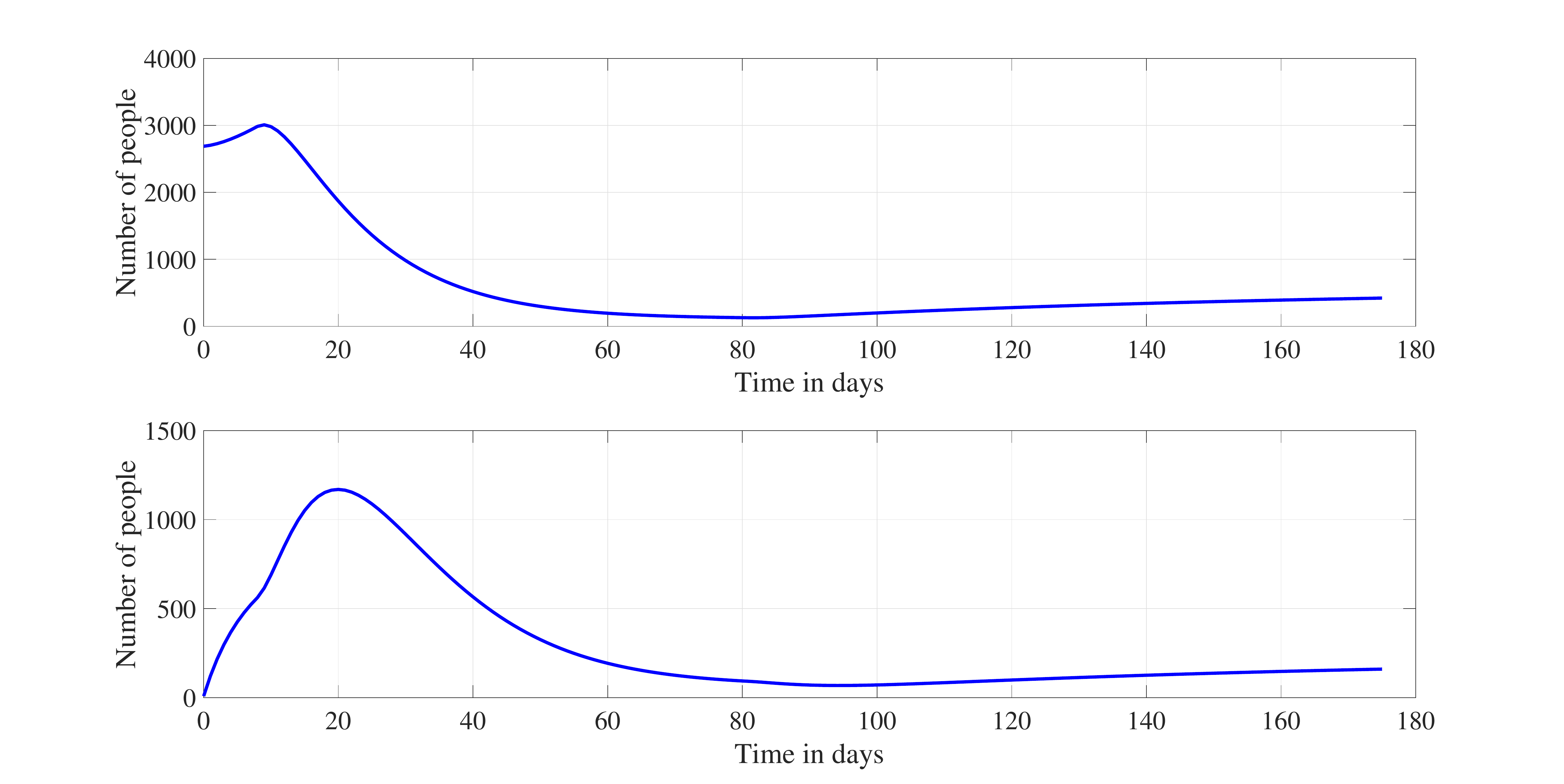} 
   \caption{\small Baseline simulation. The combined exposed ($E_{fc}+E_{pc}$)  (top)  and 
   the combined infected $(I_{fc}+I_{pc})$ (bottom).  175 days from 15 Feb.  until 8 Aug.  2020. 
   $\theta$ changed from $\theta_1=0.01$ to $\theta_2=0.995$ on day 8 and to $\theta_3=0.352$ on day 81.}
   \label{fig:variable-theta-IE}
\end{figure}
\begin{figure}[h!] 
  \hspace{-0.9cm}
  \includegraphics[width=1.15\textwidth]{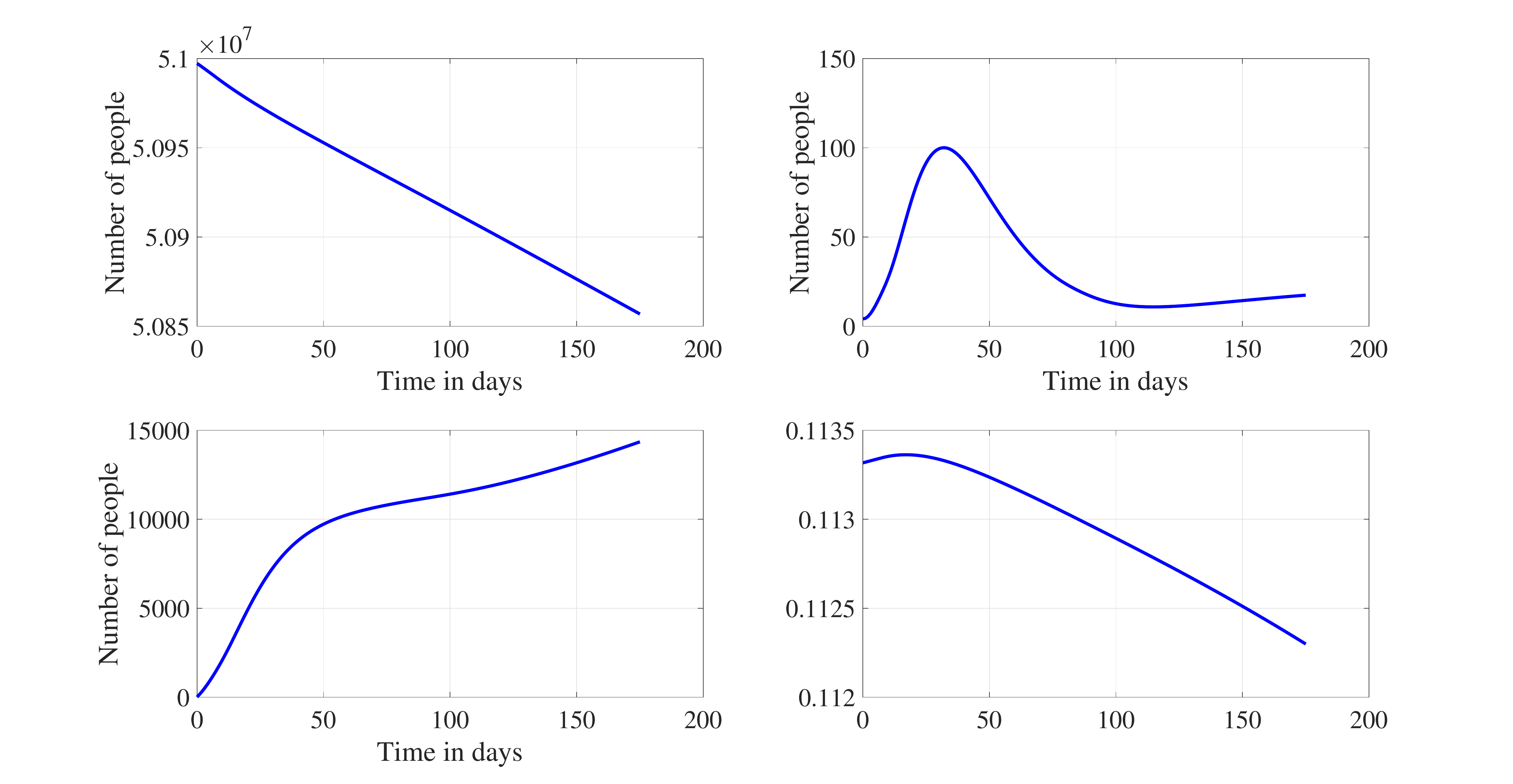} 
   \caption{\small Baseline simulation. The daily numbers of susceptibles $S$ 
   (top left); hospitalized $H$ (top right);
  recovered $R$ (bottom left); and the slight decrease in the disease infectivity $\beta$
   (bottom right).  175 days from 15 Feb.  until 8 Aug.  2020. 
   $\theta$ changed from $\theta_1=0.01$ to $\theta_2=0.995$ on day $8$ and to $\theta_3=0.352$ on day $81$.}
   \label{fig:variable-theta-SHRbeta}
\end{figure}
Fig.~\ref{fig:variable-theta-SHRbeta} depicts the simulation results for the other three 
subpopulations: the number of susceptibles $S$ (top left), the number of 
hospitalized $H$ (top right) and the number of recovered $R$ (bottom left).
The number of hospitalized $H$ should be known precisely  to the authorities, but it was 
not available online for comparison.  Finally, and this is one of the main contributions of this work, 
we show in Fig.~\ref{fig:variable-theta-SHRbeta} (lower right) that the decrease in the contact 
rate $\beta$ in the South Korea case over $175$ days is very small. This is likely due to their success 
in the first $175$ days in controlling the spread of the disease. Running the simulation for $1,000$ 
days shows a decrease to $\beta=0.094$ hence as time increases $\beta$ changes more noticeably. For 
countries with uncontrolled disease spread we expect more significant variation in $\beta$. Therefore, 
considering the infection rate as a dependent variable in that case would lead to more accurate 
model predictions.

To determine the stability of the DFE, we found numerically the eigenvalues of the Jacobian at the DFE using \eqref{eqn54}. We used the baseline parameters, except for 
setting $p_S=\mu N$, $p_{I_{fc}}=\hdots=p_R=0$, $d_I=d_H=0$, so that the population is 
constant, $\beta = 0.094$, and  $\theta=0.995$. Then, the eigenvalues were approximately   
\[
-1.7\,10^{-5}, \; -1.7\, 10^{-5},\; 
  -0.273, \; -0.0188, \;  -0.089,\;
  -0.102, \; -0.158.
\]
Thus, (when $\theta=0.995$) all the eigenvalues are real and negative, so the DFE is stable and attracting.

Hypothetical long runs with $\theta=0.995$  result in  very slow
convergence to the DFE, because the first two eigenvalues are very close to zero.
Indeed, it took over  $1,000$ days to see partial convergence to the limit.

\vskip4pt
As noted above, we found that the DFE looses its stability at $\theta^*\approx 0.034$, which is a bifurcation value 
when the real part of at least one of the eigenvalues becomes positive, and the DFE becomes 
unstable.  Then, theoretically, the EE appears. Long-time simulations with $\theta = 0.352$, 
the value on day $81$ when the directives were relaxed (which is above the critical value),
show, as expected, that the DFE is stable and attracting.

Using the baseline parameters, except for $p_S=\mu N$, $p_{I_{fc}}=\hdots=p_R=0$, $d_I=d_H=0$, $\beta=0.094$,
and $\theta = 0.352$, shows that  the eigenvalues of the Jacobian at the DFE are approximately
\[-1.7\, 10^{-5},\quad  -1.7\, 10^{-5},\; -0.0242, \; 
  -0.0184, \; -0.2384, \; -0.1517, \; -0.1580. 
  \]
Thus, (when $\theta = 0.352$) the DFE  is  stable and  after running the simulation for $1,000$ days
(with constant population of 51 million), we found approximately
\[
S = 50,971,000, \, E_{fc} = 0, \,   E_{pc} = 0,  \,  I_{fc} = 0, \,   I_{pc} = 0, \,   H = 0, \,   R = 29,000.
\]
Since there are no new infections, $R \to 0$ because of natural deaths, but very slowly.
\vskip8pt

  Model simulations with the baseline parameters for $175$ days gives for the
cumulative cases,  deaths, and asymptomatics, respectively (approximately):
\[
CC = 14,058,\quad CD = 322,\quad CA = 5,634,\quad  
\]
and
\[
CFR = 0.022,\quad  IFR = 0.016.
\]

  Simulations with baseline parameters listed in Table~\ref{tab:2} for $1,000$ days (i.e., asymptotically) yields the 
  following projections:
\[
CC = 53,216,\qquad  CD = 1,188, \qquad CA = 23,312, 
\]
\[
CFR = 0.021,\qquad  IFR = 0.015.
\]
We see that the $CFR$ and the $IFR$ are essentially constant. We refer to Subsection \ref{SK-DR}
for additional discussion.

 \vskip4pt 
We remark that at the end of $1,000$ days, $\beta$ decreased to $\beta=0.094$, 
 which indicates that allowing $\beta$
to be a variable may be of importance. On the other hand, since there were 
changes in the various directives following day $175$, our asymptotic numbers 
on day $1,000$ are more of an `if' case.
\vskip4pt

 We next note that in Spain \citep{Lancet20}, about a 
 third of the  individuals who have developed antibodies were asymptomatic, that is, 
 they did not develop any clinical symptoms and were undocumented until  
 tested for antibodies. Using (\ref{eq:CA}), we obtain that the fractions of 
 such cases in South Korea for $175$ days and for $1,000$ days, respectively, are:
\[
\frac{5,634}{14,058}\approx 40.1\%, \qquad \frac{23,312}{53,216}\approx 43.8\%.
\]
 Therefore about $40\%$ of the 
 infections were not detected. This implies that asymptomatic infections contribute silently 
 but substantially to the spread of the disease and more widespread testing for asymptomatics
 may be necessary.

Since our model predictions for the first $175$ days agree well with the data, we have enough confidence 
in our model to proceed  with the study of the  other aspects of the pandemic.
%

\subsection{Effectiveness of control and $\theta$}
\label{SK-thetaImp}
We study next the effectiveness of the control measures in South Korea that are used to contain 
the pandemic, lumped together, and described in the model by the parameter $\theta$. 
To that end, we conduct three hypothetical computer 
experiments showing the model predictions for different control measures. We 
use three other values of $\theta$, representing different responses of the government 
and the population to the pandemic. In these simulations, the $\theta$ values are applied on 
day one and do not change over the $175$ days of simulation. 
In the first simulation, we choose $\theta=0.5$, which means that there are some control measures 
but they are not very strict; in the second simulation $\theta=0.2$; and in the third simulation,  
$\theta=0.1$, that is, very few control measures are in place. All the other 
 system parameters are kept at their baseline values reported in Table~\ref{tab:2}. We note in passing that if any 
 one of these cases was realized, some of the system parameters would have been 
 different, and it is very likely that the outcome would have been worse.

Fig.~\ref{fig:theta0.5-CC-CD} shows that when $\theta=0.5$ the 
number of cumulative cases and deaths on day  $175$ (8 Aug.  2020) 
would increase $1.3$-fold approximately. 
Specifically, on that day the predicted cases would be about $19,739$ 
and deaths about $425$, whereas the real data was at $14,562$ and $304$, respectively.
 \begin{figure}[h!] 
  \hspace{-0.85cm}
  \includegraphics[width=1.10\textwidth]{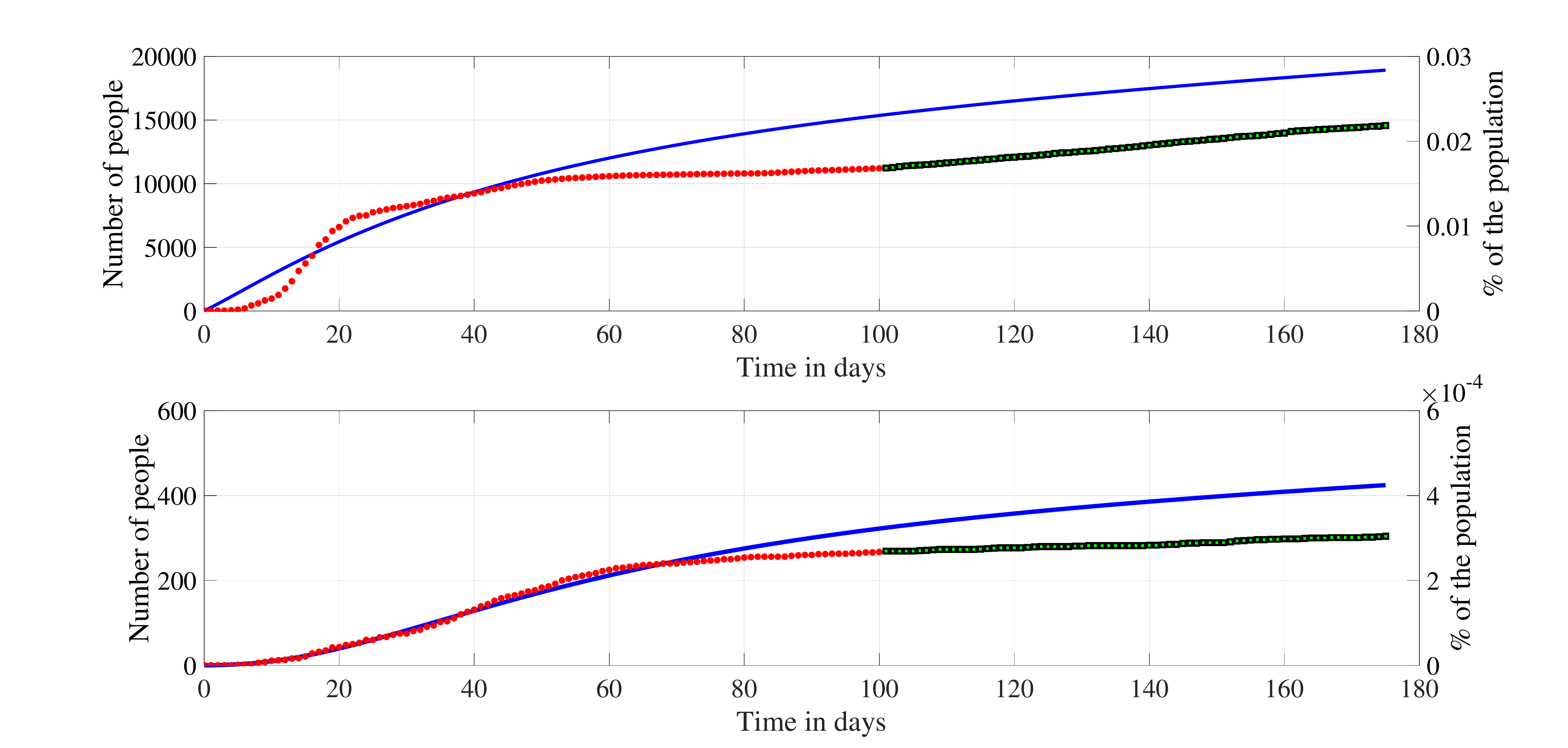} 
  \caption{\small  Simulation with $\theta=0.5$. Model predictions (blue curves) of the cumulative cases (top) and deaths (bottom); 
   over 175 days from 15 Feb. until 8 Aug. 2020. The red filled circles are the $100$ days of field 
   data used in the optimization, the green filled squares depict subsequent data. }
  \label{fig:theta0.5-CC-CD}
\end{figure}

\begin{figure}[h!] 
   \hspace{-0.85cm}
\includegraphics[width=1.10\textwidth]{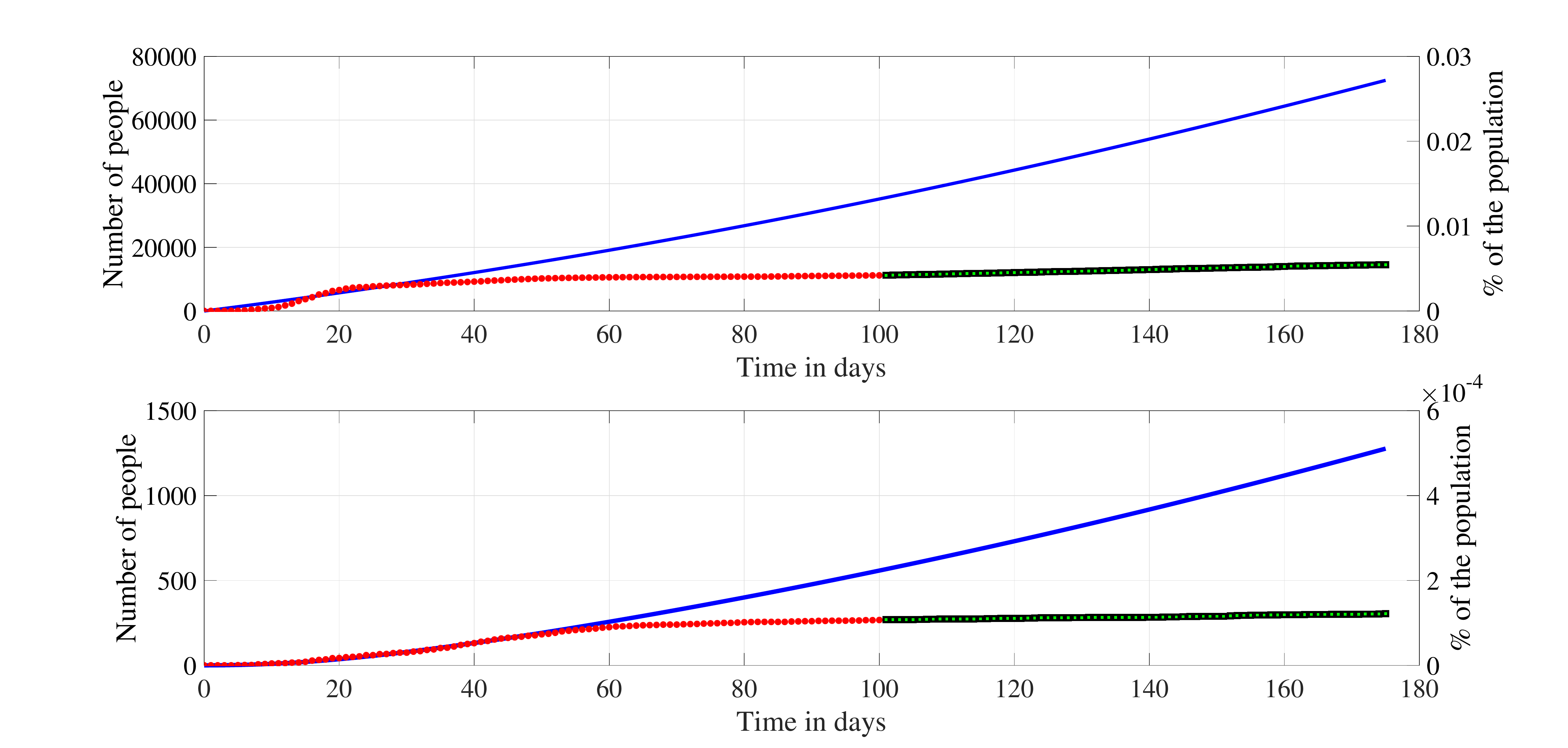} 
   \caption{\small  Simulation with $\theta=0.2$. Cumulative cases (top) and cumulative deaths (bottom). Model predictions (blue curves) of the cumulative cases (top) and deaths (bottom); 
   over 175 days from 15 Feb. until 8 Aug. 2020. The red filled circles are the $100$ days of field 
   data used in the optimization, the green filled squares depict subsequent data. }
   \label{fig:theta0.2-CC-CD}
\end{figure}
Next, Fig.~\ref{fig:theta0.2-CC-CD} shows that when $\theta=0.2$, the 
number of cumulative cases would be more than 5-fold and the number of deaths on day $175$ would be more than 
$4$-fold. Specifically, the model predicts that on day $175$ there would be about 
$73,288$ cases and $1,276$ deaths.

Finally, it is seen in Fig.~\ref{fig:theta0.1-CC-CD} that if $\theta=0.1$ the 
number of cumulative cases on day $175$ would have jumped more than 
$10$-fold and the  number of cumulative deaths more than  $7$-fold. 
Specifically, on day $175$, the model predicts about $158,613$ cases and $2,387$ 
deaths. 

Furthermore, it is common sense to expect that less control measures would lead to
worse outcomes and this is confirmed by our model predictions showing substantially worse consequences with relaxed control measures. Indeed, higher values of $\theta$ are quite effective in avoiding large scale  
disruptions of the health care system, and all other state and  economic systems 
because of the pandemic.

\begin{figure}[h!] 
   \hspace{-0.85cm}
   \includegraphics[width=1.10\textwidth]{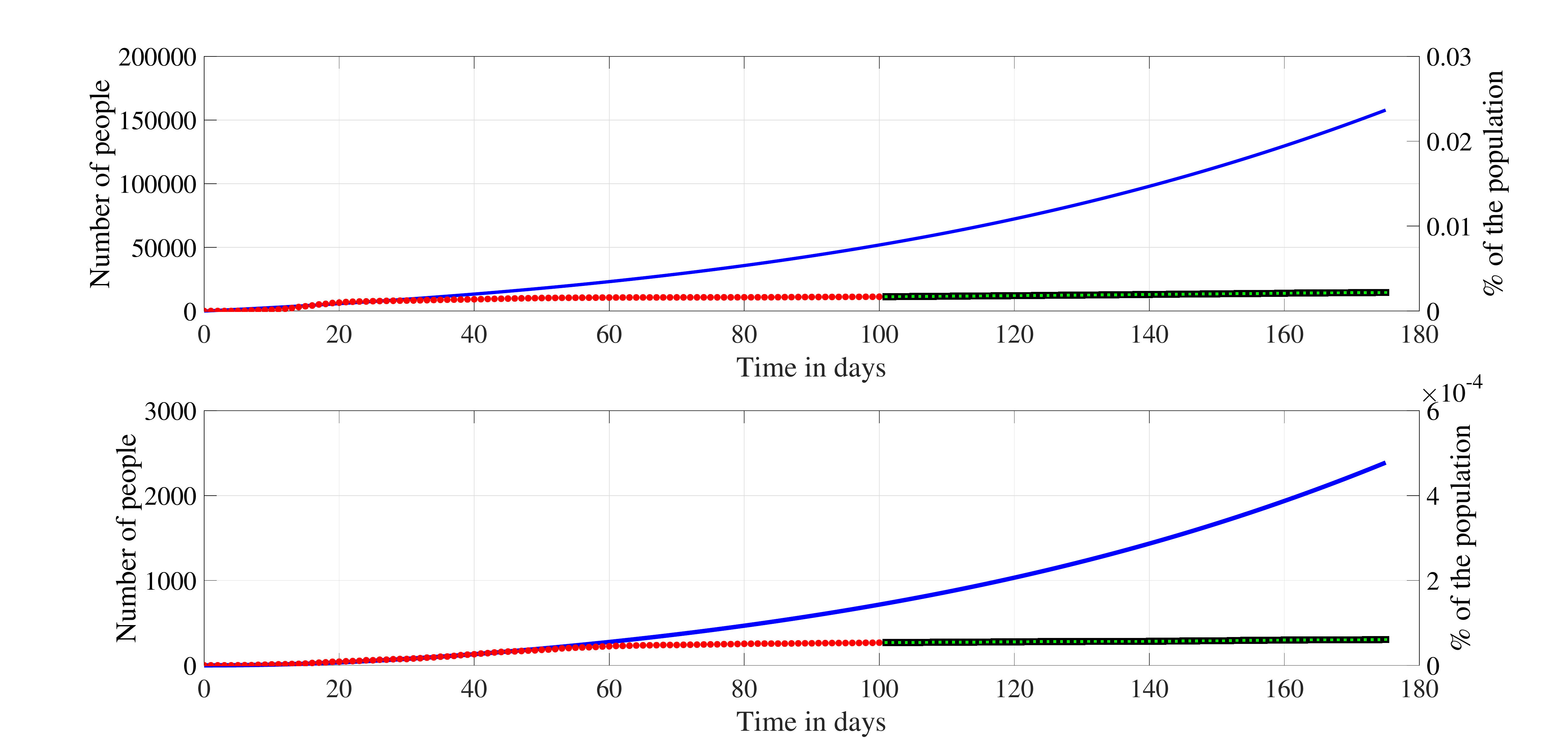} 
   \caption{\small  Simulation with $\theta=0.1$. Cumulative cases (top) and cumulative deaths (bottom).}
   \label{fig:theta0.1-CC-CD}
\end{figure}

\subsection{The COVID-19 death rates in South Korea}
\label{SK-DR}
Since \textbf{there is a considerable discussion} in the literature and especially in the media,
about the COVID-19 death rates, we use the baseline simulations to determine
two death rates, (see, e.g.,\citep{wikiterms}). The first, 
the {\it case fatality rate} (CFR), $\mu_{cov19}^{**}(t)$, is the ratio of the total deaths 
caused by the disease to all those who have been diagnosed (or documented) with the disease, 
and this includes the current infectives, hospitalized, and those who recovered or died, 
 and is given by
\[
CFR=\mu^{**}_{cov19}(t)=\frac{CD(t)}{CC(t)}.
\]
Here, $CD(t)$ is the cumulative number of deaths, (\ref{eqn218}),
and $CC(t)$, (\ref{eqn215}), is the cumulative number of infected, up to time $t$.
The second, the {\it infection fatality rate} (IFR), $\mu^*_{cov19}(t)$, is the ratio
of the deaths to all those who had the virus, including the asymptomatics, and is given as
\[
IFR= \mu^*_{cov19}(t)=\frac{CD(t)}{CC(t)+CA(t)}.
\]
\begin{figure}[h!] 
  \hspace{-1.4cm}
  \includegraphics[width=1.20\textwidth]{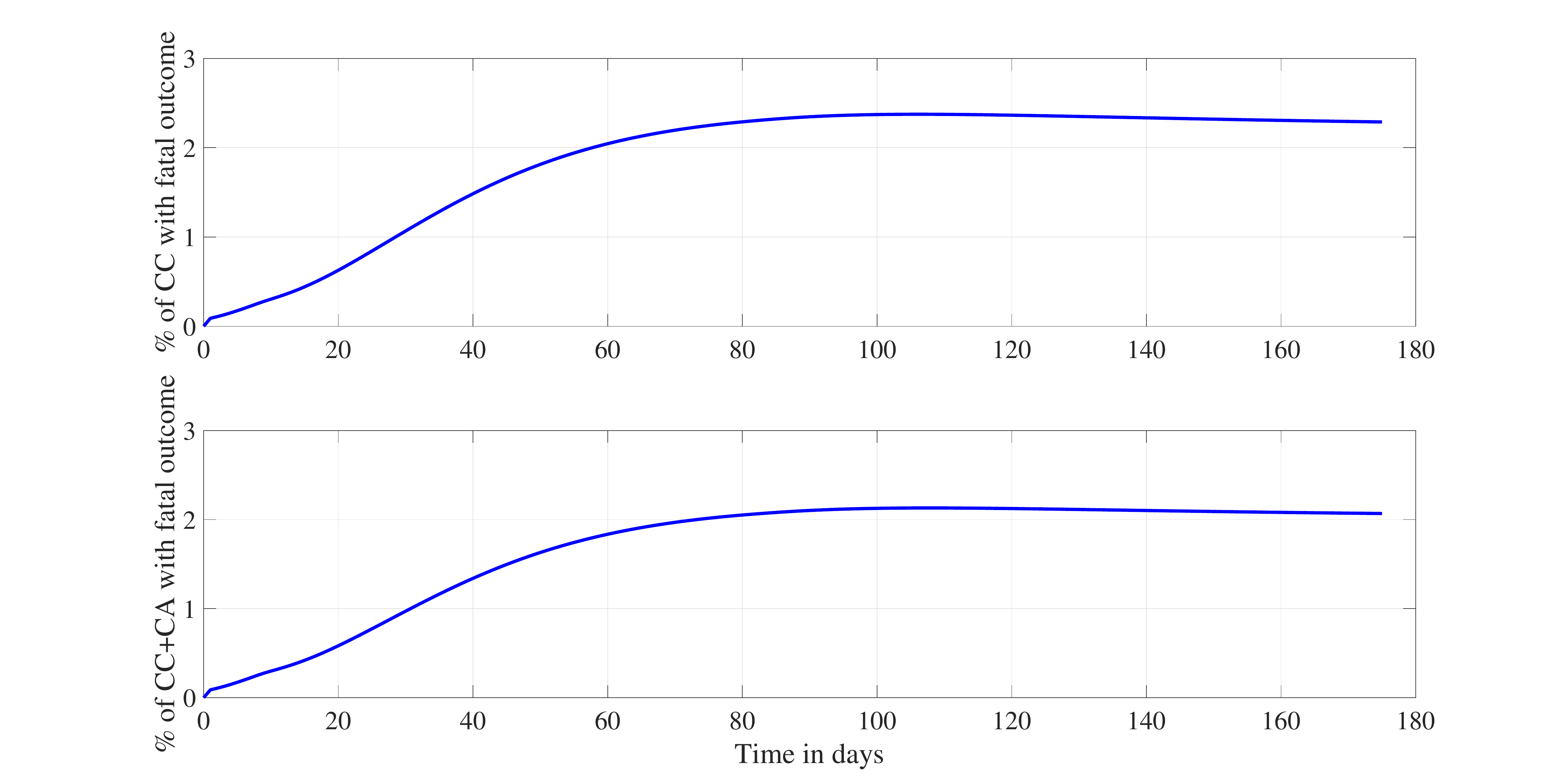}
   \caption{\small The death rates (CFR)=$\mu^{**}_{cov19}$, (top) and 
   (IFR)=$\mu^{*}_{cov19}$ (bottom)
   as functions of time in the baseline simulation.
   }
\label{fig:mucov19}
\end{figure}

Fig.~\ref{fig:mucov19} shows that
on day $175$ ($8$ Aug 2020) the CFR was approximately $2.16\%$ and the IFR was $1.57\%$. 
According to the official  KCDC \citep{KCDC}, 
on that day there were $304$ deaths and $14,562$ cases, hence
the `official' $CFR=2.09\%$ was slightly less than the model prediction.
Whereas the $CFR=\mu^{**}_{cov19}$ can be 
found in official publication, the $IFR=\mu^*_{cov19}$ cannot, since it includes those 
who were never officially counted as having the virus, and usually one has to revert 
to different ways of estimating it. Thus, our model provides a means to estimate this 
death rate that has been used (often loosely) in the media. 
%

\section{Conclusions and further research}
\label{sec-con}

This work presents a new compartmental model of the SEIR-type
for the dynamics of the COVID-19 pandemic, in particular allowing  the  
assessment of the effectiveness of the related disease control and containment measures. 
The model's novelty is two-fold. First, the populations are split
into those who follow fully all the control measures and those who follow only partially, 
such as essential workers who need to be mobile and cannot socially distance, or those who choose to only partially follow 
the directives. The split is controlled by the parameter $\theta$ that also takes indirectly 
into account the `quality' of these measures. Second, the 
infection contact rate coefficient  $\beta(t)$ is a dependent variable, the dynamics 
of which is governed by a differential inclusion, (\ref{eqn210}). 
This takes into account the intrinsic variability of $\beta$ due to various causes of infection rate 
changes, such as changes in the population behavior or improvements in treatments. 

Since the model is complex and includes a differential inclusion, the existence of its 
unique solution on every finite time interval is established by a nonstandard proof, 
using results from convex analysis. The positivity of the solutions is also established.
Moreover, when the data is measurable with respect to any random variables, then so are the 
solutions. 

Several baseline simulations, in MATLAB, of the disease dynamics were conducted for South 
Korea. The optimal model parameters were obtained by an ($\ell^1$ based) optimization 
routine fitting the model solutions with the data of both cumulative and daily cases and deaths. 
In particular, the parameter $\theta$ that controls the split between fully and partially 
compliant populations is found to be $0.010$  for the first eight days, on day eight (22 Feb. 
2020) when a lock-down and strict control measures were implemented, it jumped to $0.995$ 
and as the control measures started to ease on day $81$ (6 May 2020) it dropped to $0.352$.
The baseline simulations show that the model captures the disease dynamics very well 
when compared to the data for daily cases and deaths  (with seven day moving averages)
and cumulative cases and deaths. 
Furthermore, it provides additional information about the pandemic dynamics that 
are difficult to observe in the field, such as $40\%$ asymptomatics.

One of the model conclusions, when applied 
to South Korea, is that their success in controlling and reducing the infection rates is 
related to having a period of $73$ days in which $\theta=0.995$, which indicates that 
the majority of the population adhered to the national directives and disease control measures, which in turn were consistent and clearly applied. 

In addition, our model predicts substantially worse consequences with 
relaxed control measures which makes sense as it is common sense to expect that less control measures would lead to worse outcomes. Indeed, higher values of $\theta$ are quite effective to avoid large scale  disruptions of the health care system, and all other state and  economic systems 
because of the pandemic. This, however, is in the absence of vaccinations.

To gain further insight into the model predictions, we studied the system's equilibrium points. 
Due to the complexity of the model, instead of finding a basic stability number $\mathcal{R}_C$, 
we used the the eigenvalues of the system Jacobian. 
 We found that the disease-free equilibrium (DFE) is asymptotically stable 
when $\theta=0.995$ and becomes unstable when  $\theta^*\approx 0.034$, and since we used 
$\theta=0.352$ from day $81$ onward, the DFE is stable and attracting. 

We also used the baseline simulations to compute the death rates and found that the case fatality 
rate was $CFR=2.16\%$ and the infection fatality rate was $IFR=1.57\%$ on day $175$, as shown in 
Fig.~\ref{fig:mucov19}. The simulations for $1,000$ days show that the rates are essentially the same.

The next topic we investigated was the effectiveness of the disease containment measures controlled by the values of $\theta$. Simulations with $\theta=0.5, 0.2, 0.1$ show that not using sufficient control 
measures would have caused a substantial increase in the number of cases and deaths. Indeed, 
for $\theta=0.1$ there would be more than a $10$-fold increase in cases and $7$-fold 
increase in fatalities. These results were obtained without taking into account large-scale 
vaccination.

An important result that the model provided is that the asymptomatics, 
those who were carry the virus without symptoms, constituted about $40\%$ 
of the cases. This is very important information that is very difficult and costly to 
obtain in the field.

We also performed a partial sensitivity study of the models dependence on $\theta$, and 
the infection rates of the exposed $(\gamma_{fc}, \gamma_{pc})$. We found that
the model is sensitive to these parameters. Thus, to obtain reliable predictions, 
these model parameters need to be estimated accurately. However, since we 
did not conduct a full sensitivity analysis with respect to the full set of 38 model parameters, 
we did not present these impartial results in this work.
\vskip6pt

We now describe some of the unresolved issues that may be of interest for further
study:

(i) Add the effects of large-scale vaccination to the model and study the model predictions. This will
bring the model up-to-date and entails modest modifications of the model.

(ii) Modify the model to take into account new variants of the SARS-CoV-2. 
This entails major modifications of the model.

(iii) Replace the jump of $\theta$ to $0.352$ after day $81$ when the lock-down control 
measures started to ease, with an appropriate function of time, $\theta=\theta(t)$.

(iv)  Compare the model prediction of the hospitalized $H(t)$, (Fig. 5 
top right), and the field data.

(v) Introduce more $\theta$-like parameters to allow for the 
separate assessment of the control measures, such as wearing a face mask in public, 
washing hands often, keeping distance in public spaces, and
following the instructions. However, such an expansion may result in a more
complex model that may be more difficult to work with.

From the mathematical point of view, it may be of interest to:

Establish rigorously the bifurcation property in $\theta$; the convergence of the solutions
to the DFE when it is stable and attracting, and the convergence to the EE when it exists.
Determine that there are no other equilibrium points and analyze in more depth the properties 
of the two states. Study more thoroughly the differential inclusion for $\beta$, 
consider alternative formulations based on more general biological information about the 
disease based on published observations.  Find the optimal regularity of the solutions.
Study alternative objective functionals by assigning different weights for the number 
of cases and deaths and using an $\ell^2$-type minimization.
\vskip6pt
We conclude that the model has been sufficiently validated for the pandemic in South Korea.
It both provides insight and allows for `mathematical experiments.' We plan to use it for 
other countries and states for which reliable data can be found.

\begin{acknowledgements}
We would like to thank the referee and the editor for their thorough and helpful comments 
that improved the work. 
\end{acknowledgements}

%
\section*{Conflicts of interest}
The authors did not receive support from any organization for the submitted work.
The authors have no conflicts of interest to declare that are relevant to the content of this article.

\bibliographystyle{spbasic}      
\bibliography{references}   

\newpage
\begin{appendices}
\section{Jacobian of the system}
\label{appendix}
This appendix constructs an expression for the system's Jacobian assuming that $\beta=const.$

%
For this purpose, we write equations 
(\ref{eqn21})--(\ref{eqn27}) using the notation introduced in \eqref{eq:hat}  as
\begin{equation*}
\label{eq:system}
\frac{d\bx(t)}{dt}=\bff(t),
\end{equation*}
where the components of $\bff(t)$ are given by
\begin{align*}
f_{1}(t) & =   p_{S}- \Gamma S -\mu S,   \\
f_{2}(t)& = p_{Efc}+ \theta \Gamma S  - \widehat{\gamma}_{fc} E_{fc} 
 +\frac{\gamma_{E}}{N} E_{fc}E_{pc}, \\
f_{3}(t) & = p_{Epc}+ (1-\theta)\Gamma S  - \widehat{\gamma}_{pc} E_{pc}  
 -\frac{\gamma_{E} }{N}E_{fc}E_{pc}, \\
f_{4}(t) & = p_{Ifc}+  \gamma_{fc}E_{fc} +\gamma^{-}E_{pc}
- \widehat{ \delta} I_{fc}  +\frac{\gamma_{I}}N  I_{fc}I_{pc} ,    \\
f_{5}(t) & =  p_{Ipc}+  \gamma_{pc}E_{pc} +\gamma^{+}E_{fc}
- \widehat{\delta} I_{pc}  -\frac{\gamma_{I}}N  I_{fc}I_{pc},   \\
f_{6}(t) & = p_{H}+ \delta (I_{fc} + I_{pc})- \widehat{\sigma}_H H,    \\
f_{7}(t) & =  p_{R} + \sigma_{E}(E_{fc} + E_{pc})+ \sigma_{I}(I_{fc}
+ I_{pc})+\sigma_H H -\mu R. 
\end{align*}
The Jacobian matrix of the system is 
\begin{equation*}
\label{A1}
J(\bx)=\left(\begin{array}{ccc}
J_{11}& \dots& J_{17}\\
\vdots&\ddots&\vdots\\
J_{71}& \dots&J_{77}
\end{array}
\right),
\end{equation*}
where $J_{ij}=\partial f_{i}/\partial x_{j}$, $1\leq i, j,\leq 7$. 
To compute $J(\bx)$, we note that
\begin{align*}
\frac{d \Gamma}{dS }&=-\frac 1N \Gamma, \\
\frac{d \Gamma}{dE_{fc} }&=-\frac 1N (
\Gamma -\beta \epsilon_{Efc}),&& \frac{d \Gamma}{dE_{pc} }=-\frac 1N (
\Gamma -\beta \epsilon_{Epc}),
\\
\frac{d \Gamma}{d I_{fc} }&=-\frac 1N (\Gamma-\beta \epsilon_{Ifc}), 
&& \frac{d \Gamma}{dI_{pc} }=-\frac 1N (
\Gamma -\beta \epsilon_{Ipc}),
\\
\quad \frac{d \Gamma}{dH }&=-\frac 1N (
\Gamma -\beta \epsilon_{H}),&&
\frac{d \Gamma}{dR }=-\frac 1N \Gamma.
\end{align*}
Then, straightforward and rather tedious manipulations yield,
\begin{align*}
J_{11}&=-\Gamma \left(1-\frac SN\right)-\mu,\, J_{12}=  \frac SN(\Gamma-\beta \epsilon_{Efc}),\, J_{13}= \frac SN(\Gamma-\beta \epsilon_{Epc}),
\\
J_{14}&= \frac SN(\Gamma-\beta \epsilon_{Ifc}),\, J_{15}= \frac SN(\Gamma-\beta \epsilon_{Ipc}),
\,
J_{16}= \frac SN(\Gamma-\beta \epsilon_{H}),\, J_{17}= \frac SN\Gamma,
\\
J_{21}&=\theta \Gamma \left(1-\frac SN\right),\,J_{22}=  -\theta \frac SN(\Gamma-\beta \epsilon_{Efc})
-\widehat{\gamma}_{fc}+\gamma_{E}\frac {E_{pc}}N,
\\
J_{23}&= -\theta\frac SN(\Gamma-\beta \epsilon_{Epc})+\gamma_{E}\frac {E_{fc}}N,\,
J_{24}=-\theta\frac SN(\Gamma-\beta \epsilon_{Ifc}),
\\
J_{25}&= -\theta\frac SN(\Gamma-\beta \epsilon_{Ipc}),\,
J_{26}= -\theta \frac SN(\Gamma-\beta \epsilon_{H}),\,  J_{27}= -\theta\frac SN\Gamma -\dfrac{\gamma_E}{N^2}E_{fc}E_{pc},
\\
J_{31}&=(1-\theta) \Gamma \left(1-\frac SN\right),\, J_{32}=  -(1-\theta) \frac SN(\Gamma-\beta \epsilon_{Efc})
-\gamma_{E}\frac {E_{pc}}N,
\\
J_{33}&= -(1-\theta)\frac SN(\Gamma-\beta \epsilon_{Epc})-\widehat{\gamma}_{pc} -\gamma_{E}\frac {E_{fc}}N,\,
J_{34}=-(1-\theta)\frac SN(\Gamma-\beta \epsilon_{Ifc}),
\\
\quad J_{35}&= -(1-\theta)\frac SN(\Gamma-\beta \epsilon_{Ipc}),\,
J_{36}= -(1-\theta) \frac SN(\Gamma-\beta \epsilon_{H}),\\
J_{37}&=-(1-\theta)\frac SN\Gamma+\dfrac{\gamma_E}{N^2}E_{fc}E_{pc},
\\
J_{41}&=-\gamma_{I}\frac{I_{fc}I_{pc}}{N^{2}}
,\, J_{42}={\gamma}_{fc}-\gamma_{I}\frac{I_{fc}I_{pc}}{N^{2}},\,
J_{43}= {\gamma}^{-}-\gamma_{I}\frac{I_{fc}I_{pc}}{N^{2}},
\\
J_{44}&= -\widehat{\delta}+ \gamma_{I}\frac{I_{pc}}N -\gamma_{I}\frac{I_{fc}I_{pc}}{N^{2}},\,J_{45}=  \gamma_{I}\frac{I_{fc}}N -\gamma_{I}\frac{I_{fc}I_{pc}}{N^{2}},
\\
J_{46}&=-\gamma_{I}\frac{I_{fc}I_{pc}}{N^{2}},\,
J_{47}=-\gamma_{I}\frac{I_{fc}I_{pc}}{N^{2}}
\\
J_{51}&=\gamma_{I}\frac{I_{fc}I_{pc}}{N^{2}}
,\, J_{52}={\gamma}^{+}+\gamma_{I}\frac{I_{fc}I_{pc}}{N^{2}},\,
J_{53}= {\gamma}_{pc}+\gamma_{I}\frac{I_{fc}I_{pc}}{N^{2}},
\\
J_{54}&= -\gamma_{I}\frac{I_{pc}}N +\gamma_{I}\frac{I_{fc}I_{pc}}{N^{2}},\, J_{55}=-\widehat{\delta}- \gamma_{I}\frac{I_{fc}}N -\gamma_{I}\frac{I_{fc}I_{pc}}{N^{2}}, \\
J_{56}&=\gamma_{I}\frac{I_{fc}I_{pc}}{N^{2}},\,
J_{57}=\gamma_{I}\frac{I_{fc}I_{pc}}{N^{2}},
\\
J_{61}&=0,\, J_{62}=0,\quad J_{63}= 0,\, J_{64}= \delta,\, J_{65}=\delta,\,
J_{66}=-\widehat{\sigma}_{H},\,  J_{67}=0,
\\
J_{71}&=0,\quad J_{72}=\sigma_{E},\quad J_{73}= \sigma_{E},\, J_{74}= \sigma_{I},
\quad J_{75}=\sigma_{I},\,
J_{76}=\sigma_{H},\,  J_{77}=-\mu.
\end{align*}

\end{appendices}

\end{document}